\newif\ifsingle
\newif\ifFullversion
\newcommand{\norm}[1]{\left\lVert#1\right\rVert}
\newtheorem{theorem}{Theorem}
\definecolor{NewColor}{rgb}{0,0,0} %{0.2,0,0.5}
\def\versegap{\hspace{1pt minus.5pt}}
\newcommand\versesep{\versegap\textbullet\versegap}
\acrodef{adc}[ADC]{analog-to-digital convertor}
\acrodef{cs}[CS]{compressed sensing}
\acrodef{dtft}[DTFT]{discrete-time Fourier transform}
\acrodef{dnn}[NN]{neural network} 
\acrodef{csi}[CSI]{channel state information}
\acrodef{map}[MAP]{maximum a-posteriori probability}
\acrodef{snr}[SNR]{signal-to-noise ratio}
\acrodef{bs}[BS]{base station} 
\acrodef{iot}[IOT]{Internet of Things}
\acrodef{iov}[IOV]{Internet of Vehicles}
\acrodef{mimo}[MIMO]{multiple-input multiple-output}
\acrodef{mse}[MSE]{mean-squared error}
\acrodef{pdf}[PDF]{probability density function}
\acrodef{rv}[RV]{random variable}
\acrodef{ml}[ML]{machine learning}
\acrodef{mf}[MF]{matched filter}
\acrodef{fec}[FEC]{forward error correction}
\acrodef{rs}[RS]{Reed-Solomon}
\acrodef{lti}[LTI]{linear time-invariant}
\acrodef{wss}[WSS]{wide-sense stationary}
\acrodef{psd}[PSD]{power spectral density}
\acrodef{ser}[SER]{symbol error rate} 
\acrodef{ber}[BER]{bit error rate} 
\acrodef{sgd}[SGD]{stochastic gradient descent} 
\acrodef{isi}[ISI]{intersymbol interference}  
\acrodef{awgn}[AWGN]{additive white Gaussian noise} 
\acrodef{ut}[UT]{user terminal} 
\acrodef{mmw}[mmWave]{millimeter wave}
\acrodef{noma}[NOMA]{non-orthognal multiple access}
\acrodef{mac}[MAC]{mulitple access channel}
\acrodef{fl}[FL]{federated learning}
\acrodef{ct}[CT]{continuous-time}
\acrodef{uavs}[UAVs]{unmanned aerial vehicles}
\acrodef{uav}[UAV]{unmanned aerial vehicle}
\acrodef{dci}[DCI]{distributed collaborative intelligence}
\acrodef{psnr}[PSNR]{peak signal-to-noise ratio}
\title{Collaborative Learning with a Drone Orchestrator}
\author{Mahdi Boloursaz Mashhadi$^{*}$, \textit{Senior Member, IEEE}, Mahnoosh Mahdavimoghadam$^{*}$, \textit{Student Member, IEEE}, \\
Rahim Tafazolli$^{*}$, \textit{Senior Member, IEEE}, and Walid Saad$^{\dagger}$, \textit{Fellow, IEEE} \thanks{

Copyright (c) 2015 IEEE. Personal use of this material is permitted. However, permission to use this material for any other purposes must be obtained from the IEEE by sending a request to pubs-permissions@ieee.org. 

$^*$ M. Boloursaz Mashhadi, M. Mahdavimoghadam, and R. Tafazolli are
with 5GIC \& 6GIC, Institute for Communication Systems (ICS), University of Surrey, Guildford, United Kingdom (email: \{m.boloursazmashhadi, m.mahdavimoghadam, r.tafazolli\}@surrey.ac.uk). $^\dagger$ W. Saad is with the Bradley Department of Electrical and Computer Engineering, Virginia Tech, Arlington, VA, 22203, USA (email: walids@vt.edu).

This work was supported by a grant from the UK Department for Science, Innovation, and Technology under project TUDOR (Towards Ubiquitous 3D Open Resilient Network), and by the US National Science Foundation under Grant CNS-2114267. 

The codes are available at \href{https://github.com/DrMahdiBoloursazMashhadi/Collaborative-Learning-with-a-Drone-Orchestrator}{https://github.com/DrMahdiBoloursazMashhadi/
Collaborative-Learning-with-a-Drone-Orchestrator}.
}}
\begin{document}
	
	\maketitle
	\pagestyle{empty}
	\thispagestyle{empty}
	
%%%%%%%%%%%%%%%%%%%%%%%%%%%%%%%%%%%%%%%%%%%%%%%%%%
\begin{abstract}
\color{black} In this paper, the problem of drone-assisted collaborative learning is considered. In this scenario, \color{black} swarm of intelligent wireless devices train a shared neural network (NN) model with the help of a drone. Using its sensors, each device records samples from its environment to gather a local dataset for training. The training data is severely heterogeneous as various devices have different amount of data and sensor noise level. The intelligent devices iteratively train the NN on their local datasets and exchange the model parameters with the drone for aggregation. \color{black}For this system, \color{black}the convergence rate of collaborative learning \color{black} is derived while considering \color{black} data heterogeneity, sensor noise levels, and communication errors, \color{black}then, \color{black} the drone trajectory that maximizes the final accuracy of the trained NN \color{black} is obtained. The \color{black}proposed trajectory optimization approach is aware of both the devices data characteristics (i.e., local dataset size and noise level) and their wireless channel conditions, and significantly improves the convergence rate and final accuracy in comparison with baselines that only consider data characteristics or channel conditions. \color{black}{Compared to state-of-the-art baselines, the proposed approach achieves an average 3.85\% and 3.54\% improvement in the final accuracy of the trained NN on benchmark datasets for image recognition and semantic segmentation tasks, respectively. Moreover, the proposed framework achieves a significant speedup in training, leading to an average 24\% and 87\% saving in the drone's hovering time, communication overhead, and battery usage, respectively for these tasks.}      
\end{abstract}
%%%%%%%%%%%%%%%%%%%%%%%%%%%%%%%%%%%%%%%%%%%%%%%%%%
Keywords--Collaborative Intelligence, Federated Learning, Drone Trajectory Optimization, \textcolor{black}{Artificial Intelligence of Things (AIoT)}.
%%%%%%%%%%%%%%%%%%%%%%%%%%%%%%%%%%%%%%%%%%%%%%%%%%%%%%

\section{Introduction}\label{sec:intro}
%%%%%%%%%%%%%%%%%%%%%%%%%%%%%%%%%%%%%%%%%%%%%%%%%%%%%%
% \thispagestyle{github}
Artificial intelligence (AI) will be a cornerstone of future wireless systems~\cite{chaccour2022less}. In particular, distributed learning will enable scattered intelligent connected devices to collaboratively train models on their locally available datasets without sharing raw data, thereby improving data privacy. \textcolor{black}{To collaboratively train a neural network (NN) model, intelligent devices can iteratively train the NN on their local datasets and communicate the NN parameters with an orchestrator which aggregates the models and sends it back to devices for further training \cite{DistLearning1, DistLearning2}}. In many applications, the intelligent devices are mobile and a drone could be an appropriate choice for orchestrating the training and aggregating the models due to its high mobility and flexible deployment. Drone-assisted collaborative learning has numerous use cases including \textcolor{black}{Artificial Intelligence of Things (AIoT)} scenarios in which numerous mobile IoT devices collaborate to train small AI models \cite{nguyen2021federated, UAVIoT, AIoT}, AI-assisted robotic swarm teleoperations in human inaccessible areas (e.g., survivor detection and rescue operation by collaborating robots post disaster), and Internet of Vehicles (IoV) applications \cite{IoV1, IoV2, IoV3} with autonomous vehicles collaboratively learning to drive in remote areas. 

\textcolor{black}{Very few studies \cite{zeng2017energy,pham2022energy,donevski2021federated,pham2021uav,do2021deep,wu2022distributed} have considered collaborative learning with the help of a drone. Due to limited battery, computation, and communication capacity of the devices and the drone, optimization of drone trajectory becomes necessary\cite{zeng2017energy}}. An energy-efficient federated learning (FL) framework is developed in \cite{pham2022energy} \textcolor{black}{for} minimizing the overall energy consumption of both drone and devices under various resource constraints. The authors in \cite{donevski2021federated}, derive the drone trajectory that minimizes device staleness, i.e. the largest difference in the epoch number between any two collaborating devices. \textcolor{black}{In \cite{pham2021uav}, a drone is leveraged to wirelessly power collaborating devices. The wireless resources and drone placement are optimized to achieve maximum power efficiency in this research.} A deep deterministic policy gradient (DDPG) algorithm is introduced in \cite{do2021deep} for joint drone placement and resource allocation so as to achieve energy-efficient FL. The authors in \cite{wu2022distributed} find the drones trajectory to maximize the average spectrum efficiency. In \cite{Merouane}, the authors perform trajectory optimization to minimize a weighted sum of energy and time consumption for UAV-enabled federated learning using a deep reinforcement learning approach.\color{black} The above studies \cite{zeng2017energy,pham2022energy,donevski2021federated,pham2021uav,do2021deep,wu2022distributed, Merouane} do not take into account the convergence speed and final performance of the trained NN model. This is despite the fact that the convergence speed directly affects the battery usage and total hovering time of the drone. In \cite{Joint}, the authors derive the expected convergence rate of federated averaging in presence of channel noise and fading in a static wireless scenario without transmitter/receiver mobility, assuming noise-free local datasets at the devices. However, the performance of collaborative learning is impacted both by the noise and fading in the drone-device wireless channel, as well as the size and quality (e.g. sensor noise level) of the local datasets at the intelligent devices. 

The main contribution of this work is to derive the expected convergence rate of collaborative training while taking into account effects of all the above factors and formulate the drone trajectory optimization problem based on our convergence result. We optimize the drone trajectory to speed up the training while considering device movements and the drone velocity constraints. \color{black} In summary, our key contributions include:
\begin{itemize}
    \item \color{black}We derive the expected convergence rate and the final performance of drone-assisted collaborative learning taking into account both noise and fading over the device-drone wireless channel, as well as the size and quality of the local device datasets considering heterogeneous and noisy local datasets at the devices. Using this result, we demonstrate how the final accuracy of the trained model is affected by both the communication errors and sensor noise levels while the speed of convergence is mostly determined by the communication errors only.\color{black}
    
    \item \color{black}Using our convergence result and considering both static and mobile intelligent devices, we analytically derive the optimal drone placement and speed to maximize accuracy and performance of the trained NN.\color{black}
    
    \item \color{black}Using our convergence result, we find the optimal drone trajectory that maximizes the asymptotic performance of the trained NN model while satisfying the drone velocity constraints. We demonstrate the importance of considering the noisy device datasets, by showing a considerable training speedup and an average 1.6\% improvement in image recognition accuracy, in comparison with the case in which the drone trajectory is optimized assuming the static noise-free convergence results in \cite{Joint}.\color{black}
    
    \item \color{black}{Compared to existing baselines, our proposed approach achieves an average improvement of 3.85\% in the final accuracy of the trained NN, as well as a significant speedup in the training, leading to a 24\% saving in the drone's hovering time, communication overhead, and battery usage.}  
\end{itemize}

The rest of the paper is organized as follows. In \Cref{sec:Model}, we present the system model. \Cref{sec:Approach}, provides the problem formulation and our proposed approach. \Cref{sec:Results} presents the simulation results, and \Cref{sec:Conclusions} concludes the paper.

\color{black}
\textbf{Notations:} Boldface capital and lower-case letters represent matrices and vectors, respectively (e.g. $\boldsymbol{A}$ and $\boldsymbol{a}$). Calligraphic letters denote sets (e.g. $\mathcal{A}$). The matrix transpose expressed by $(\cdot)^{\dagger}$. Furthermore, $\lVert . \rVert$ is the norm value, and $\mathbb{E}\{\cdot\}$ is the expectation operator. $U\left[ .,.\right]$ and $N(.,.)$ represent the uniform and normal distributions, respectively. $\boldsymbol{I}$ denotes the identity matrix.
\color{black}
\section{System Model}\label{sec:Model}
%%%%%%%%%%%%%%%%%%%%%%%%%%%%%%%%%%%%%%%%%%%%%%%%%%%%%%
% \begin{figure}[t]
%     \centering
%     \includegraphics[width=0.5\textwidth]{}
%     \caption{Collaborative learning orchestrated by a drone.}
%     \label{fig:system_model}
% \end{figure}
\begin{figure}[t]
    \centering
    \includegraphics[scale=.3]{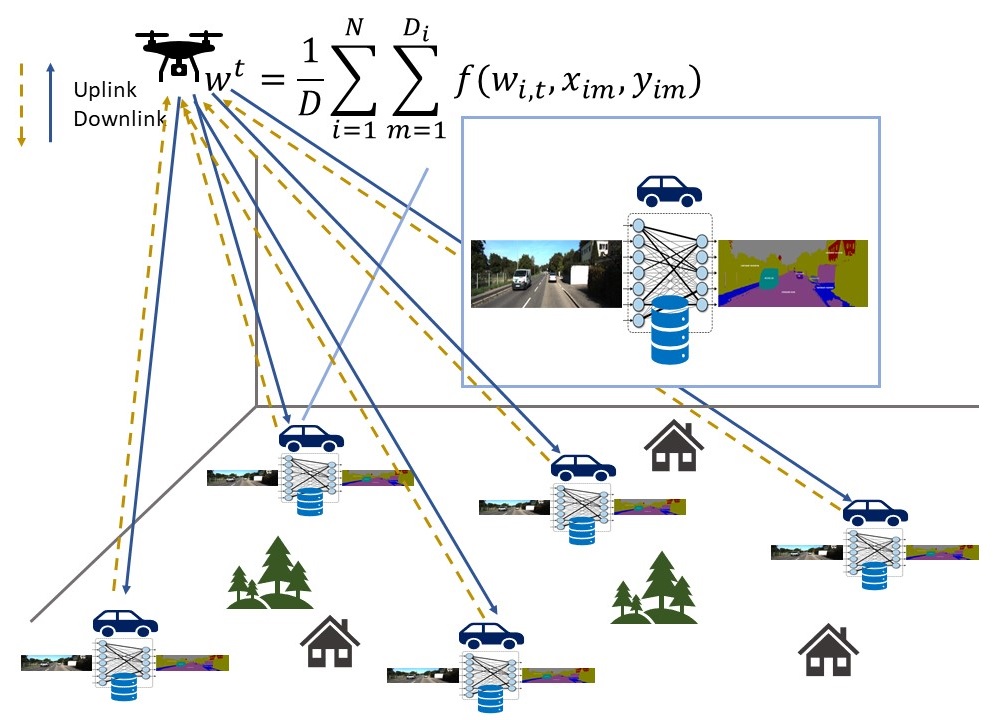}
    \caption{Collaborative learning orchestrated by a drone.}
    \label{fig:system_model}
\end{figure}
\textcolor{black}{We consider a set of $N$ intelligent wireless devices} (e.g., driverless cars, AIoT devices, rescue robots, etc.) collaborating to train a shared NN model with the help of a drone, as shown in Fig. ~\ref{fig:system_model}. Each device $i$ has a local dataset captured by its own sensors \textcolor{black}{(e.g., cameras, GPS, motion/temperature/light sensors, etc.)}, and denoted by $\mathcal{D}_i = \{\boldsymbol {x}_{im}+\boldsymbol {n}_{im}, y_{im}\}$, where $m = 1, \hdots , D_i=|\mathcal{D}_i|$, and $\boldsymbol {x}_{im}$ and $y_{im}$ are the model input features and output labels, respectively, and $\boldsymbol {n}_{im}$ is a zero-mean Gaussian sensor noise with variance $\sigma^2_i$, i.e., $\boldsymbol {n}_{im} \sim \mathcal{N}(0, \sigma^2_i \boldsymbol {I})$. \color{black}The data quality at device $i$ is heterogeneously determined by $\sigma_i^2$.\color{black} The training loss function is $f(\boldsymbol {w}, \boldsymbol {x}_{im}+\boldsymbol {n}_{im}, y_{im})$, where $\boldsymbol {w}$ is the vector of NN model parameters. The average local loss at device $i$ is given by $   F_i(\boldsymbol {w_t}) = 1/D_i \sum_{m=1}^{D_i} f(\boldsymbol {w_t}, \boldsymbol {x}_{im}+\boldsymbol {n}_{im}, y_{im})$ and the global loss over $N$ devices is $F(\boldsymbol {w_t}) = \sum_{i=1}^{N} \frac{D_i}{D} F_i(\boldsymbol {w_t})=\frac{1}{D}\sum_{i=1}^{N} \sum_{m=1}^{D_i} f(\boldsymbol {w_{i,t}}, \boldsymbol {x}_{im}+\boldsymbol {n}_{im}, y_{im})$, where $D = \sum_{i=1}^{N} D_i$ is the total size of the entire dataset.

Collaborative training is orchestrated by the drone and the objective is to minimize the global loss $F(\boldsymbol {w})$. \color{black}Training is iterative, where each device $i$ at aggregation round $t$ performs a stochastic gradient descent update on its local dataset according to $\boldsymbol {w}_{i,t}=\boldsymbol {w}_{t-1}-\lambda \nabla F_i(\boldsymbol {w}_{t-1})$, $\lambda$ being the learning rate, and then communicates $\boldsymbol {w}_{i,t}$ to the drone. \color{black} The drone then aggregates all device models according to $\boldsymbol {w}_t=\frac{\sum_{i=1}^{N} D_i \boldsymbol {w}_{i,t}}{D}$, and \textcolor{black}{sends $\boldsymbol {w}_{t+1}$} back to the devices for further training. This process carries on iteratively until convergence. Although we consider the FedAvg \cite{FedAvg} algorithm in our framework, an extension of the results to other collaborative learning algorithms \cite{FedBoost}, \cite{FedNova} is straightforward.

We consider each local model $\boldsymbol {w}_i$ is transmitted as a single packet in the uplink using a cyclic redundancy check (CRC) mechanism to detect errors in the received models at the drone. \color{black}Whenever the received local model contains errors, i.e. the CRC is wrong, the drone will simply exclude the erroneous model from the global model aggregation and will not ask the corresponding
device to re-transmit. This is because due to the iterative nature of federated training, updates from that specific device will still take part in many future rounds of model aggregations, thereby still sufficiently participating in training of the global model. Simulations show that in many scenarios, it is not worth asking for re-transmissions due to the iterative nature of federated training.\color{black} ~Denoting the CRC checksum by $C(\boldsymbol {w}_{i,t}) \in \{0,1\}$, the drone will update the global FL model according to $\boldsymbol {w_t}=\frac{\sum_{i=1}^{N} D_i C(\boldsymbol {w}_{i,t}) \boldsymbol {w}_{i,t}}{\sum_{i=1}^{N} D_i C(\boldsymbol {w}_{i,t})}$ \cite{Joint}. The average packet error rate for device $i$ is given by $e_{i,t}=1-\exp{\left(-\frac{\theta B N_0}{\mathbb{E}|h_{i,t}|^2 \rho_i}\right)}$,
where $h_{i,t}$ is the channel gain between device $i$ and the drone, $B$ is the bandwidth, $N_0$ is the noise power spectral density, and $\theta$ is a waterfall threshold \cite{Waterfall}.

\color{black}The channel gain between the drone and device $i$ is expressed as $|h_{i,t}|^2=\pi_{i,t}\delta_{i,t}$, in which $\delta_{i,t}$ denotes the small-scale fading  term with $\mathbb{E}\{\delta_{i,t}\}=\nu_i$ and $\pi_{i,t}$ is the path loss. The average path loss depends on the probability of a line of sight (LoS) link between the drone and device $i$ at aggregation round $t$. We have $\mathbb{E}\{\pi_{i,t}\}=(\frac{s}{4 \pi f_c })^2d_{i,t}^{-\alpha}(\pi_{i,t}^{LoS} \zeta_{i,t}+\pi_{i,t}^{NLoS} (1-\zeta_{i,t}))$, where $d_{i,t}$ is the distance between the drone and device $i$ at aggregation round $t$, $f_c$ is the carrier frequency, and the constants $s$ and $\alpha$ represent the speed of light, and the path loss exponent. Here, $\pi_{i,t}^{LoS}$ and $\pi_{i,t}^{NLoS}$ represent the additional losses for the LoS and non-LoS links, respectively, and $\zeta_{i,t}$ is the probability of having a LoS link between the drone and device $i$ at aggregation round $t$. This probability can be given by $\zeta_{i,t}=\frac{1}{1 + a \exp{(-b (\kappa_{i,t} - a))}}$, where $\kappa_{i,t}$ is the elevation angle between the drone and device $i$ at time $t$, and $a$, $b$ are environment-related parameters \cite{Mozaffari, Alzenad, Elzanaty}. According to experimental measurements presented in \cite{LTE}, for moderate altitudes (less than 100 meters), $\zeta_{i,t} \approx 1$ and we use this assumption hereinafter. The error rate $e_{i,t}(\boldsymbol {p}^t, \boldsymbol {p}^t_i)=1-\exp{\left(-\frac{\theta B N_0}{(\frac{s}{4 \pi f_c })^2d_{i,t}^{-\alpha} \pi_{i,t}^{LoS} \nu_i \rho_i}\right)}$ is hence a function of the drone $\boldsymbol {p}^t=(x^t, y^t, H)$ and device positions $\boldsymbol {p}_i^t=(x^t_i, y^t_i, 0)$ at aggregation round $t$, where $d_{i,t}=((x^t-x^t_i)^2+(y^d-y^t_i)^2+H^2)^{1/2}$. In the rest of this work, at times, we use $e_{i,t}$ for notation simplicity but we keep in mind that this is a function of the drone and device locations.\color{black}

As the model parameters are iteratively exchanged between devices and the drone through the wireless channel, communication errors will affect the performance of collaborative learning. The drone position during training determines the channel gain and the communication error rate between the drone and the devices, thereby affecting the performance of collaborative training. On the other hand, performance of collaborative learning depends on the amount and quality of data locally available at the devices, e.g., a higher sensor noise at a device will degrade quality of its local dataset. Hence, the optimal trajectory for the drone should be determined by a joint optimization that considers both the wireless factors, i.e., the channel gain and signal-to-noise ratio (SNR), and the learning parameters, i.e., the amount and quality of local data. The goal of this paper is to optimize the drone trajectory to achieve the fastest learning and the best final performance for the trained NN model.

\vspace{5mm}

%%%%%%%%%%%%%%%%%%%%%%%%%%%%%%%%%%%%%%%%%%%%%%%%%%%%%%
\section{Trajectory Optimization}\label{sec:Approach}
%%%%%%%%%%%%%%%%%%%%%%%%%%%%%%%%%%%%%%%%%%%%%%%%%%%%%%

%To solve(\ref{eq:PF1} and \ref{eq:TF1} ),we will be in need of in-depth analysis of the effect of wireless parameters on FL.

%We assume that the drone flies horizontally at the constant altitude $H$, and denote the position of the drone by $a_{d}^t=(H,a_x,a_y)$, and the position of device $i$ by $a_{i}^t=(0,a_{x_i},a_{y_i})$. \color{black}To later change the notation here.\color{black} 

%In particular, $C(\boldsymbol {w}_{i,t})=0$ indicates that the local FL model received by the drone contains errors; otherwise, we have $C(\boldsymbol {w}_{i,t})=1$.  
%where
%\begin{align}
%C(\boldsymbol {w}_{i,t})= \begin{dcases}
%  1, \mathrm{with} \ \mathrm{probability} \ 1-e_{i,t},\\
%  0, \mathrm{with} \ \mathrm{probability} \ e_{i,t}.\end{dcases}
%\end{align}

\color{black}
To derive the expected convergence rate of the collaborative learning algorithm, we make the following assumptions. These assumptions can be satisfied by several widely used loss functions such as cross-entropy, mean squared error, and logistic regression \cite{Lipschitz}. 

\vspace{2mm}
\noindent I) We assume that the gradient $\nabla F(\boldsymbol {w})$ of $F(\boldsymbol {w})$ is uniformly Lipschitz continuous with respect to $\boldsymbol {w}$ \cite{Lipschitz}, i.e. $ \|\nabla F(\boldsymbol {w}_{t+1})-\nabla F(\boldsymbol {w}_{t})\| \le L \|\boldsymbol {w}_{t+1}-\boldsymbol {w}_{t}\|,$ where $L$ is a positive constant and $\|\boldsymbol{w}_{t+1}-\boldsymbol{w}_{t}\|$ is the norm of $\boldsymbol{w}_{t+1}-\boldsymbol{w}_{t}$.

\vspace{2mm}
\noindent II) We assume that $F(\boldsymbol{w})$ is strongly convex with positive parameter $\mu$, such that $F(\boldsymbol{w}_{t+1}) \ge F(\boldsymbol{w}_t)+(\boldsymbol{w}_{t+1}-\boldsymbol{w}_{t})^T \nabla F(\boldsymbol{w}) +\mu/2\|\boldsymbol{w}_{t+1}-\boldsymbol{w}_{t}\|^2$.

\vspace{2mm}
\noindent III) We assumed that $F(\boldsymbol{w})$ is twice continuously differentiable. We get $\mu \boldsymbol{I} \le \nabla^2 F(\boldsymbol{w}) \le L \boldsymbol{I}$.

\vspace{2mm}
\noindent IV) We assume $\nabla_{\boldsymbol{wx}}^2 f(\boldsymbol{w}_t, \boldsymbol{x}_{im}, y_{im})^T \nabla_{\boldsymbol{wx}}^2 f(\boldsymbol{w}_t, \boldsymbol{x}_{im}, y_{im}) \le \eta \boldsymbol{I}$.

\vspace{2mm}
\noindent V) Finally, we assume that $\|\nabla f(\boldsymbol{w}_t, \boldsymbol{x}_{im}, y_{im})\|^2 \le c_1 + c_2 \|\nabla F(\boldsymbol{w}_t)\|^2$ with $c_1, c_2 > 0$.

\begin{comment}
\begin{itemize}
\item We assume that the gradient $\nabla F(\boldsymbol {w})$ of $F(\boldsymbol {w})$ is uniformly Lipschitz continuous with respect to $\boldsymbol {w}$ \cite{Lipschitz}, i.e. $ \|\nabla F(\boldsymbol {w}_{t+1})-\nabla F(\boldsymbol {w}_{t})\| \le L \|\boldsymbol {w}_{t+1}-\boldsymbol {w}_{t}\|,$ where $L$ is a positive constant and $\|\boldsymbol{w}_{t+1}-\boldsymbol{w}_{t}\|$ is the norm of $\boldsymbol{w}_{t+1}-\boldsymbol{w}_{t}$.

\item We assume that $F(\boldsymbol{w})$ is strongly convex with positive parameter $\mu$, such that $F(\boldsymbol{w}_{t+1}) \ge F(\boldsymbol{w}_t)+(\boldsymbol{w}_{t+1}-\boldsymbol{w}_{t})^T \nabla F(\boldsymbol{w}) +\mu/2\|\boldsymbol{w}_{t+1}-\boldsymbol{w}_{t}\|^2$.
    
\item We assumed that $F(\boldsymbol{w})$ is twice continuously differentiable. We get $\mu \boldsymbol{I} \le \nabla^2 F(\boldsymbol{w}) \le L \boldsymbol{I}$.

\item We assume $\nabla_{\boldsymbol{wx}}^2 f(\boldsymbol{w}_t, \boldsymbol{x}_{im}, y_{im})^T \nabla_{\boldsymbol{wx}}^2 f(\boldsymbol{w}_t, \boldsymbol{x}_{im}, y_{im}) \le \eta \boldsymbol{I}$.

\item Finally, we assume that $\|\nabla f(\boldsymbol{w}_t, \boldsymbol{x}_{im}, y_{im})\|^2 \le c_1 + c_2 \|\nabla F(\boldsymbol{w}_t)\|^2$ with $c_1, c_2 > 0$.
\end{itemize}
\end{comment}

\vspace{2mm}

\color{black}{
\begin{theorem} \label{thm:Thrm}
Given the learning rate 
$\lambda=1/L$, we have

\begin{align}\label{Theorem}
    &\mathbb{E}(F(\boldsymbol{w}_{t+1})-F(\boldsymbol{w}^*)) \le \Phi_t \mathbb{E}(F(\boldsymbol{w}_{t})-F(\boldsymbol{w}^*))+(J_t+K_t),
\end{align}
where $\boldsymbol{w}^*$ and $\boldsymbol{w}_{t}$ are the vector of optimal NN model parameters, and the global NN parameters at aggregation round $t$, respectively. The error terms $J_t=\frac{2c_1}{L D} \sum_{i=1}^{N} D_i e_{i,t}$ and $K_t=\frac{\eta M}{2L D^2} \sum_{i=1}^{N} D_i (1-e_{i,t}) \sigma^2_i$ are due to the packet errors and sensor noise, and $\Phi_{t}=1-\frac{\mu}{L}+\frac{4\mu c_2}{LD}\sum_{i=1}^N D_i e_{i,t}$. 

\end{theorem}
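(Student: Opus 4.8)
The plan is to run a one–aggregation–round descent argument in the standard smoothness–plus–strong-convexity style, while carefully separating the two perturbations that appear: the devices dropped by the CRC check, and the sensor noise corrupting each local gradient. I would first write down the per-round model relation. Since $\boldsymbol{w}_{i,t+1} = \boldsymbol{w}_t - \lambda \nabla F_i(\boldsymbol{w}_t)$ with $\nabla F_i$ evaluated on the noisy samples $\boldsymbol{x}_{im}+\boldsymbol{n}_{im}$, the aggregation rule yields $\boldsymbol{w}_{t+1} = \boldsymbol{w}_t - \lambda \boldsymbol{g}_t$, where $\boldsymbol{g}_t$ is the CRC-weighted average of the noisy local gradients. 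I would then decompose $\boldsymbol{g}_t = \nabla F(\boldsymbol{w}_t) + \boldsymbol{e}^{\mathrm{drop}}_t + \boldsymbol{e}^{\mathrm{noise}}_t$, where $\boldsymbol{e}^{\mathrm{drop}}_t = -\tfrac1D\sum_i D_i(1-C(\boldsymbol{w}_{i,t+1}))\nabla F_i(\boldsymbol{w}_t)$ collects the contributions of erroneous packets, and $\boldsymbol{e}^{\mathrm{noise}}_t$ is obtained from a first-order Taylor expansion of each $\nabla f(\boldsymbol{w}_t,\boldsymbol{x}_{im}+\boldsymbol{n}_{im},y_{im})$ about $\boldsymbol{n}_{im}=\boldsymbol 0$, whose linear term is $\nabla^2_{\boldsymbol{wx}}f\,\boldsymbol{n}_{im}$. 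A technical point to settle here is the random normalizing denominator $\sum_i D_i C(\boldsymbol{w}_{i,t+1})$ in the renormalized aggregation: I would replace it by its mean, or bound it, so as to land on the clean $D$ and $D^2$ denominators that appear in $J_t$ and $K_t$.

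Next I would invoke the $L$-smoothness descent lemma (Assumptions I, III), $F(\boldsymbol{w}_{t+1}) \le F(\boldsymbol{w}_t) - \lambda\langle\nabla F(\boldsymbol{w}_t),\boldsymbol{g}_t\rangle + \tfrac{L\lambda^2}{2}\|\boldsymbol{g}_t\|^2$, substitute the decomposition, and take expectations in stages. Taking the expectation over the sensor noise first: the linear Taylor term is zero-mean, which annihilates the cross terms $\langle\nabla F(\boldsymbol{w}_t),\boldsymbol{e}^{\mathrm{noise}}_t\rangle$ and $\langle\boldsymbol{e}^{\mathrm{drop}}_t,\boldsymbol{e}^{\mathrm{noise}}_t\rangle$, while $\mathbb{E}\|\boldsymbol{e}^{\mathrm{noise}}_t\|^2$ is controlled by Assumption IV together with $\mathbb{E}\|\boldsymbol{n}_{im}\|^2 = M\sigma^2_i$ ($M$ the ambient dimension), and only successfully received packets carry noise into the average — this is exactly the origin of $K_t=\tfrac{\eta M}{2LD^2}\sum_i D_i(1-e_{i,t})\sigma^2_i$. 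The higher-order remainder of the Taylor expansion would be argued negligible, or absorbed into $\eta$.

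Then, taking the expectation over the CRC indicators ($\mathbb{E}[1-C(\boldsymbol{w}_{i,t+1})]=e_{i,t}$) and using Assumption V in the form $\|\nabla F_i(\boldsymbol{w}_t)\|^2 \le c_1 + c_2\|\nabla F(\boldsymbol{w}_t)\|^2$ (via Jensen on the per-sample gradients) to bound $\|\boldsymbol{e}^{\mathrm{drop}}_t\|^2$ and the cross term $\langle\nabla F(\boldsymbol{w}_t),\mathbb{E}\,\boldsymbol{e}^{\mathrm{drop}}_t\rangle$, one gets a term proportional to $\tfrac{c_1}{D}\sum_i D_i e_{i,t}$ (the source of $J_t$) plus a term proportional to $\tfrac{c_2}{D}\big(\sum_i D_i e_{i,t}\big)\|\nabla F(\boldsymbol{w}_t)\|^2$. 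Finally I would set $\lambda=1/L$, so the clean descent part is $-\tfrac{1}{2L}\|\nabla F(\boldsymbol{w}_t)\|^2$, and keep the $c_2$ error term attached to it, writing the combined gradient-norm coefficient as $-\tfrac{1}{2L}\big(1-\tfrac{4c_2}{D}\sum_i D_i e_{i,t}\big)\|\nabla F(\boldsymbol{w}_t)\|^2$. Applying the Polyak–Łojasiewicz inequality implied by $\mu$-strong convexity (Assumption II), $\|\nabla F(\boldsymbol{w}_t)\|^2 \ge 2\mu\,(F(\boldsymbol{w}_t)-F(\boldsymbol{w}^*))$, to this (nonnegative-coefficient) term turns it into $-\tfrac{\mu}{L}\big(1-\tfrac{4c_2}{D}\sum_i D_i e_{i,t}\big)(F(\boldsymbol{w}_t)-F(\boldsymbol{w}^*))$; subtracting $F(\boldsymbol{w}^*)$ from both sides yields the stated recursion with $\Phi_t=1-\tfrac{\mu}{L}+\tfrac{4\mu c_2}{LD}\sum_i D_i e_{i,t}$ and additive term $J_t+K_t$.

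The main obstacle I anticipate is bookkeeping rather than any single hard inequality: correctly identifying which perturbation terms are multiplicative in $F(\boldsymbol{w}_t)-F(\boldsymbol{w}^*)$ (hence belong in $\Phi_t$) versus purely additive (hence belong in $J_t+K_t$), and in particular pairing the $c_2$-term with the descent term \emph{before} — not after — invoking Polyak–Łojasiewicz, since using strong convexity as an \emph{upper} bound on $\|\nabla F\|^2$ would be invalid and would spoil the coefficient. Secondary care is needed to handle the random normalizing denominator in the aggregation and to keep the sign structure of the noise cross terms under control across the nested expectations over $\boldsymbol{n}_{im}$ and the CRC indicators.
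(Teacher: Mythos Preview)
Your proposal is correct and follows essentially the same route as the paper: the descent lemma from $L$-smoothness, a decomposition of the gradient error into a packet-drop part and a sensor-noise part, Assumption~V to bound the former (producing $J_t$ and the $c_2$ contribution to $\Phi_t$), Assumption~IV together with $\mathbb{E}\|\boldsymbol{n}_{im}\|^2=M\sigma_i^2$ to bound the latter (producing $K_t$), and then the Polyak--\L{}ojasiewicz inequality from $\mu$-strong convexity applied after the $c_2$ term has been attached to the descent coefficient. The only cosmetic differences are that the paper keeps the random denominator $\sum_i D_i C(\boldsymbol{w}_{i,t})$ in the packet-error term and eliminates it later via the crude bound $(D-\sum_i D_i C)^2\le D\,(D-\sum_i D_i C)$, and that the paper linearizes the noisy gradient by invoking convexity in $\boldsymbol{x}$ rather than a first-order Taylor remainder argument; neither changes the structure of the proof.
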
}

%\vspace{5mm}
\begin{proof}
    See APPENDIX. \ref{app:xx}
\end{proof}
%\vspace{3mm}
\color{black}
The above theorem provides a bound on the expected improvement of the training loss in each aggregation round taking into account both packet errors $e_{i,t}$ and sensor noise levels $\sigma^2_i$. Iterative application of (\ref{Theorem}) over a series of $T$ aggregation rounds, gives a bound on the expected convergence speed and final performance of the trained NN. Assuming $\{\Phi_t\}_{t=1}^{T} < 1$, the speed of convergence is determined only by $\{\Phi_t\}_{t=1}^{T}$ values, while the expected final performance of the trained NN is determined by the interplay of the collective effects from $J_t$, $K_t$, and $\Phi_t$ terms over $T$ aggregation rounds. In an ideal scenario without packet errors and sensor noises, i.e. assuming $e_{i,t}=0$ and $\sigma_i^2=0$, the expected training loss would asymptotically converge to the optimum loss value, i.e. $\lim_{T \to \infty} {\mathbb{E}(F(\omega_{T+1}))}=F(\omega^*)$, while the packet errors and sensor noises have a destructive effect on the convergence speed and final performance of the trained NN.

In comparison with \cite{Joint}, \textbf{Theorem 1} shows the separate effects of the sensor noise levels on the accuracy of the trained NN, allowing noise-aware optimization of the drone trajectory. The packet errors increase $\Phi_t$ values, thereby slowing down the convergence. The sensor noises seem to have no effect on the convergence speed, but degrade the final performance of the trained NN by introducing the additional error terms $K_t$ values. To speed up convergence and improve the final performance of the NN, the drone should position itself closer to devices with larger local datasets and less noisy data. In the following, we study the optimum placement and trajectory of the drone, based on the above insights.
\color{black}

\subsection{Drone placement for stationary devices}\label{placement}
Assuming stationary devices, a fixed position can be optimized for the drone. In this stationary case, we have $\boldsymbol{p}^t=\boldsymbol{p}^d=(x^d,y^d)$, $e_{i,t}=e_i$, and $\Phi_t=\Phi, \forall t \in \{1,\dots,T\}$. Hence, by applying (\ref{Theorem}) recursively, we get:
\begin{align}\label{2}
    &\mathbb{E}(F(\boldsymbol{w}_{T+1})-F(\boldsymbol{w}^*)) \le \Phi^T \mathbb{E}(F(\boldsymbol{w}_{0})-F(\boldsymbol{w}^*)) \\ \nonumber
    &+\left ( \frac{2c_1}{L D} \sum_{i=1}^{N} D_i e_i+\frac{\eta M}{2L D^2} \sum_{i=1}^{N} D_i (1-e_i) \sigma^2_i \right )\frac{1-\Phi^{T}}{1-\Phi}.
\end{align}

\begin{algorithm} [t] 
		\caption{\color{black} Optimal Placement for Drone-assisted Collaborative Learning with Stationary Devices}
  \color{black}
		\label{alg:aloc1}
		\KwReq{$\left\{(x_i,y_i)\right\}_{i=1}^N$, $\left\{D_i\right\}_{i=1}^N$, $\left\{A_i\right\}_{i=1}^N$, $\left\{\rho_i\right\}_{i=1}^N$, $\theta$, $B$, $N_0$, $\alpha$, $\eta$, $c_1$, $L$, $M$, $H$, $\delta$.}
		\KwData{$k \!=\! 1$, $\mathbf{p}^d_0 \!=\! (0,0)$, $\mathbf{p}^d_1 \!=\! \left( \frac{\sum_{i=1}^{N}D_ix_i}{\sum_{i=1}^{N}D_i}, \frac{\sum_{i=1}^{N}D_iy_i}{\sum_{i=1}^{N}D_i} \right)$.}
		\While{$\|\mathbf{p}^d_k-\mathbf{p}^d_{k-1}\| > \delta$ }{
			{-Calculate the partial derivatives (PDs) (\ref{Derivative2}), (\ref{Derivative3}) at ${\mathbf{p}}^d \!=\! {\mathbf{p}}^d_k$,\\ 
			-Using PDs, derive linearizations $f_k(\boldsymbol{p}^d) \! \approx \! \boldsymbol{c}^{\dagger}_k \boldsymbol{p}^d \! + \! \beta_k$ and $g_k(\boldsymbol{p}^d) \! \approx \! \boldsymbol{d}^{\dagger}_k \boldsymbol{p}^d \! + \! \gamma_k$ at ${\mathbf{p}}^d \!=\! {\mathbf{p}}^d_k$,\\
			-Use the Charnes-Cooper’s transformations $\boldsymbol{q}_k \! = \! \frac{\boldsymbol{p}^d}{\boldsymbol{d}^{\dagger}_k \boldsymbol{p}^d+\gamma_k}$ and $z_k \! = \! \frac{1}{\boldsymbol{d}^{\dagger}_k \boldsymbol{p}^d+\gamma_k}$ to get optimization (\ref{eq:harnes-Cooper}) at ${\mathbf{p}}^d \!=\! {\mathbf{p}}^d_k$,\\
			-Solve the resulting linear programming (\ref{eq:harnes-Cooper}) to get ${\mathbf{p}}^d_{k+1}$,\\
			-$k=k+1$,}}
	\KwOut{Optimal drone location ${\mathbf{p}^*}^d={\mathbf{p}}^d_k$.}
\end{algorithm}

Given $\Phi<1,\versesep T \rightarrow \infty$, the first term on the right-hand side of (\ref{2}) vanishes and the asymptotic learning performance is determined by the term $(\frac{2c_1}{L D} \sum_{i=1}^{N} D_i e_i+\frac{\eta M}{2L D^2} \sum_{i=1}^{N} D_i (1-e_i) \sigma^2_i)\frac{1}{1-\Phi}$. \color{black} Hence, the optimal drone position can be derived by solving
\color{black}
\begin{mini}|s|[0]
    {(x^d,y^d)}{\left(\frac{2c_1}{L D} \sum_{i=1}^{N} D_i e_i+\frac{\eta M}{2L D^2} \sum_{i=1}^{N} D_i (1-e_i) \sigma^2_i\right)\frac{1}{1-\Phi}}
    {}
    {\label{eq:ATL_placement}}{}
    \addConstraint{}{\Phi<1.}
\end{mini}
Optimization (\ref{eq:ATL_placement}) is a fractional programming problem which we solve using the Charnes-Cooper's \cite{Schaible1974ParameterfreeCE} variable transformation technique. To this end, we first obtain a linear approximation of the numerator $f(x^d,y^d)=\frac{2c_1}{L D} \sum_{i=1}^{N} D_i e_i+\frac{\eta M}{2L D^2} \sum_{i=1}^{N} D_i (1-e_i) \sigma^2_i$ and denominator $g(x^d,y^d)=1-\Phi=\frac{\mu}{L}-\frac{4\mu c_2}{LD}\sum_{i=1}^N D_i e_{i}$ functions in terms of the decision variables $(x^d,y^d)$. It is straightforward to derive the following

%\begin{align}\label{Derivative1}
%    \frac{\partial f}{\partial e_i}&=\frac{D_i}{D} (\frac{2c_1}{L}-\frac{\eta M \sigma_i^2}{D}) \\ \nonumber
%    \frac{\partial g}{\partial e_i}&=\frac{4\mu c_2}{L D} D_i \\ \nonumber
%    \frac{\partial e_i}{\partial d_i}&=\frac{\theta B N_0 \alpha d_i^{\alpha-1}}{\rho_i A_i} \exp{\left(-\frac{\theta B N_0}{\mathbb{E}|h_{i,t}|^2 \rho_i}\right)}  \\ \nonumber
%    \frac{\partial d_i}{\partial x^d}&=\frac{x^d-x_i}{d_i} \\ \nonumber
%    \frac{\partial d_i}{\partial y^d}&=\frac{y^d-y_i}{d_i},
%\end{align}

\begin{align}\label{Derivative2}
    \frac{\partial f}{\partial x^d}&=\sum_{i=1}^{N} \frac{D_i}{D}(\frac{2c_1}{L}-\frac{\eta M \sigma_i^2}{2L D})(\frac{\theta B N_0 \alpha }{\rho_i A_i}) d_i^{\alpha-2} \\ \nonumber
    &\times \exp{\left(-\frac{\theta B N_0 d_i^{\alpha}}{A_i \rho_i}\right)} (x^d-x_i), \\ \nonumber
    \frac{\partial f}{\partial y^d}&=\sum_{i=1}^{N} \frac{D_i}{D}(\frac{2c_1}{L}-\frac{\eta M \sigma_i^2}{2L D})(\frac{\theta B N_0 \alpha }{\rho_i A_i}) d_i^{\alpha-2} \\ \nonumber
    &\times \exp{\left(-\frac{\theta B N_0 d_i^{\alpha}}{A_i \rho_i}\right)} (y^d-y_i),
\end{align}
where $A_i=(\frac{s}{4\pi f_c})^2 \pi_{i,t}^{LoS} \nu_i$, and $\frac{\partial (.)}{\partial (.)}$ denotes the partial derivative. We also have
\begin{align}\label{Derivative3}
    \frac{\partial g}{\partial x^d}&=\sum_{i=1}^{N} (\frac{4\mu c_2 D_i}{L D})(\frac{\theta B N_0 \alpha }{\rho_i A_i}) d_i^{\alpha-2} \exp{\left(-\frac{\theta B N_0 d_i^{\alpha}}{A_i \rho_i}\right)} (x^d-x_i), \\ \nonumber
    \frac{\partial g}{\partial y^d}&=\sum_{i=1}^{N} (\frac{4\mu c_2 D_i}{L D})(\frac{\theta B N_0 \alpha }{\rho_i A_i}) d_i^{\alpha-2} \exp{\left(-\frac{\theta B N_0 d_i^{\alpha}}{A_i \rho_i}\right)} (y^d-y_i).
\end{align}
%Now note that the geometrical weighted centroid of the devices, i.e., $(x^c,y^c)=(\frac{\sum_{i=1}^{N} D_i x_i}{\sum_{i=1}^{N} D_i},\frac{\sum_{i=1}^{N} D_i y_i}{\sum_{i=1}^{N} D_i})$ is a good initial point to obtain the linear approximations around it.

\color{black} Now note that we then derive the linear approximation of $f(\boldsymbol{p}^d)$ and $g(\boldsymbol{p}^d)$ in the vicinity of geometrical weighted centroid $(x^c, y^c)=\left( \frac{\sum_{i=1}^{N}D_ix_i}{\sum_{i=1}^{N}D_i}, \frac{\sum_{i=1}^{N}D_iy_i}{\sum_{i=1}^{N}D_i} \right)$. This is due to the fact that we expect and also observe in simulations that the optimal drone position is normally not too far from the geometrically weighted centroid, closer to any device with a larger local dataset. Now with the linearization $f(\boldsymbol{p}^d)\approx \boldsymbol{c}^{\dagger} \boldsymbol{p}^d+\beta$, $g(\boldsymbol{p}^d) \approx \boldsymbol{d}^{\dagger} \boldsymbol{p}^d+\gamma$, and using the Charnes-Cooper’s variable transformations $\boldsymbol{q}=\frac{\boldsymbol{p}^d}{\boldsymbol{d}^{\dagger} \boldsymbol{p}^d+\gamma}$ and $z=\frac{1}{\boldsymbol{d}^{\dagger} \boldsymbol{p}^d+\gamma}$, optimization (\ref{eq:ATL_placement}) is equivalent to the following linear programming problem 
\begin{mini}|s|[0] 
    {\boldsymbol{p}^d}{\boldsymbol{c}^{\dagger} \boldsymbol{q}^d+\beta z}
    {}
    {\label{eq:harnes-Cooper}}{}
    \addConstraint{}{\boldsymbol{d}^{\dagger} \boldsymbol{q}^d+\gamma z = 1}
    \addConstraint{}{z>0.}
\end{mini}
\color{black}To find the optimal drone position for stationary devices, we take an iterative approach starting from the weighted centroid and iteratively perform linearizations (\ref{Derivative2}) and (\ref{Derivative3}), as well as Charnes-Cooper’s fractional programming (\ref{eq:harnes-Cooper}) till convergence. This approach solves optimization (\ref{eq:ATL_placement}) in a fast and computationally efficient way and the simulation results show that we usually achieve convergence with very few (i.e., usually less than 10) iterations. \color{black}\textbf{Algorithm} \ref{alg:aloc1} presents our proposed drone placement algorithm for collaborative learning with stationary devices. We would like to emphasize here that this is an asymptotic solution assuming a sufficiently large number of aggregation rounds, i.e. $T \to \infty$. Without this assumption, the optimum drone placement would depend on the first term on the right hand side of (\ref{2}), and consequently on the choice of the initial NN weights, i.e. $\omega_0$. Although, this solution is based on iterative linearization of (\ref{eq:ATL_placement}) and thereby suboptimal, the simulation results show sufficient quality of the solution, leading to a significant performance improvement in comparison with other baselines. 
\color{black}

\begin{algorithm} [t] 
		\caption{\color{black} Trajectory Optimization for Drone-assisted Collaborative Learning with Moving Devices}
  \color{black}
		\label{alg:aloc2}
		\KwReq{$\left\{(\Tilde{x}_i^{0},\Tilde{y}_i^{0})\right\}_{i=1}^N$, $\left\{D_i\right\}_{i=1}^N$, $\left\{A_i\right\}_{i=1}^N$, $\left\{\rho_i\right\}_{i=1}^N$, $M$, $\delta$, $\left\{\left\{(v^t_{x_i}, v^t_{y_i})\right\}_{i=1}^N\right\}_{t=1}^{T}$, $v_{max}$, $H$, $\theta$, $B$, $N_0$, $\alpha$, $\eta$, $c_1$, $L$.}
		\KwData{Using \textbf{Algorithm 1} optimize the initial drone location $({x^*}^{0},{y^*}^{0})$ at $t=0$.}
		\For{$t=1, \hdots, T$ }{
			{-Calculate the projected speed by ${{{{v^*}}_x^t}}=\min\{{{\hat{v}_x^t}}, v_{x, max}\}$ and ${{{{v^*}}_y^t}}=\min\{{{\hat{v}_y^t}}, v_{y, max}\}$ using (\ref{Derivative5}),\\ 
			-Set the next trajectory point as: $({x^*}^{t},{y^*}^{t})=({x^*}^{t-1}+{{{{v^*}}_x^t}},{y^*}^{t-1}+{{{{v^*}}_y^t}})$,}}
	\KwOut{Optimal drone trajectory $\left\{({x^*}^{t},{y^*}^{t})\right\}_{t=1}^T$.}
\end{algorithm}

\subsection{Trajectory optimization for moving devices}\label{trajectory}
For the general case of moving devices, and by applying (\ref{Theorem}) recursively, we get:
\begin{align}\label{Corollary1}
    &\mathbb{E}(F(\boldsymbol{w}_{T+1})-F(\boldsymbol{w}^*)) \le \prod_{t=1}^T \Phi_t \mathbb{E}(F(\boldsymbol{w}_{0})-F(\boldsymbol{w}^*)) \\ \nonumber
    &+\sum_{t=1}^{T-1}(J_t+K_t) \prod_{\tau=t+1}^{T} \Phi_{\tau}+(J_T+K_T).
\end{align}
Similarly, with $\Phi_t < 1$ and $ T \rightarrow \infty$, the first term on the right-hand side of (\ref{Corollary1}) vanishes and the asymptotic learning performance is determined by the residual term which itself depends on the drone trajectory. We define the asymptotic trajectory loss (ATL) as $\sum_{t=1}^{T-1}(J_t+K_t) \prod_{\tau=t+1}^{T} \Phi_{\tau}+(J_T+K_T)$, and use it as a surrogate of the final performance of the trained model for trajectory optimization.

Considering the asymptotic performance of the training (\ref{Corollary1}), the drone trajectory optimization reduces to minimizing the ATL, given by 
\begin{mini}|s|[0]
    {{{\{\boldsymbol{p}^1 \dots \boldsymbol{p}^{\frac{T}{\kappa}+1}\}} }}{\sum_{t=1}^{T-1}(J_t+K_t) \prod_{\tau=t+1}^{T} \Phi_{\tau}+(J_T+K_T)}
    {}
    {\label{eq:ATL_Optimization}}{}
    \addConstraint{}{\Phi_t < 1,}{\quad\forall t=1,\hdots,T}
    \addConstraint{}{\norm{\boldsymbol{p}^{t'+1}-\boldsymbol{p}^{t'}}}{}{\le v_\textrm{max},}{\quad\forall t'=1,\hdots,\frac{T}{\kappa}}
    \addConstraint{}{\boldsymbol{p}^1=\boldsymbol{p}^{\frac{T}{\kappa}+1}},
\end{mini}
\color{black}where $\boldsymbol{p}^1 \dots \boldsymbol{p}^{\frac{T}{\kappa}+1}$ denote the drone trajectory points and we assume the drone remains at each trajectory point for $\kappa$ aggregation rounds and $\frac{T}{\kappa}$ is the total number of trajectory points.\color{black} The optimization is carried out over a horizon of $T$ aggregation rounds, and the added constraints enforce the maximum possible speed of the drone denoted by $v_\textrm{max}$, and ensure that the drone returns to its starting point (i.e., the drone's charging station) after training. Note that the ATL minimized in (\ref{eq:ATL_Optimization}), is a function of the drone trajectory through the device channel gains $\mathbb{E}|h_{i,t}|^2$ and packet error rates $e_{i,t}$ for each trajectory position $\boldsymbol{p}^{t'}$. Optimization of the drone trajectory in the most general case (\ref{eq:ATL_Optimization}), is a constrained non-linear programming problem in terms of the trajectory points $\boldsymbol{p}^{t'}$, which we solve using sequential quadratic programming (SQP) \cite{Boyd}. 

\color{black} Denoting the total time required for each aggregation round by \textit{aggregation time}, i.e. $T_{aggr}$, and the time required for the drone to traverse in between the consecutive trajectory points by $T_{trav}$. We assume the \textit{aggregation time} to be negligible in comparison with the traverse time, i.e. $T_{trav} >> T_{aggr}$. This assumption is valid in typical scenarios where sufficient computation and communication resources is available for collaborative learning at the devices, and there is a rich literature that consider minimization of the aggregation time and the corresponding trade-offs \cite{DistLearning1,Joint,AoI}. With this assumption, the total training time is dominated by $T_{trav}$ and can be approximated by the total number of the trajectory points multiplied by the traverse time, i.e. $\frac{T}{\kappa} \times T_{trav}$, where the drone remains at each trajectory point for ${\kappa}$ aggregation rounds. Please note that in this work, we have focused on achieving the best final performance optimizing the trajectory points for the asymptotic case where $T \to \infty$, and not on the training time minimization. However, as we show in our simulation results, our approach also significantly reduces the number of aggregation rounds $T$ required to achieve a target accuracy.\color{black}

\color{black}To get better insights on the factors impacting the drone's movement, let us assume $\kappa=1$, i.e. the drone stays at each trajectory point only for one aggregation round, and relax the $\boldsymbol{p}^1=\boldsymbol{p}^{T+1}$ constraint. An extension to $\kappa>1$ is straightforward. With these assumptions, the next optimal position for the drone at each aggregation round can be analytically derived in a sequential fashion. Consider the drone position at two consecutive aggregation rounds $\boldsymbol{p}^{t}=(x^{t-1}+v_x^t,y^{t-1}+v_y^t)$ and $\boldsymbol{p}^{t-1}=(x^{t-1},y^{t-1})$, it is straightforward to derive the following 

\begin{align}\label{Derivative4}
    \frac{\partial (J_t+K_t)}{\partial e_{i,t}}&=\frac{D_i}{D} (\frac{2c_1}{L}-\frac{\eta M \sigma_i^2}{2L D}), \\ \nonumber
    \frac{\partial e_{i,t}}{\partial d_{i,t}}& \approx \frac{\theta B N_0 \alpha (d_{i,t-1})^{\alpha-1}}{\rho_i A_i} \exp{\left(-\frac{\theta B N_0}{\mathbb{E}|h_{i,t}|^2 \rho_i}\right)},  \\ \nonumber\frac{\partial d_{i,t}}{\partial v_x^{t}} &\approx \frac{(x^{t-1}-x_i^{t-1})+(v_x^t-v^t_{x_i})}{d_{i,t-1}}=\frac{\Tilde{x}_i^{t-1}+(v_x^t-v^t_{x_i})}{d_{i,t-1}}, \\ \nonumber
    \frac{\partial d_{i,t}}{\partial v_y^{t}} &\approx \frac{(y^{t-1}-y_i^{t-1})+(v_y^t-v^t_{y_i})}{d_{i,t-1}}=\frac{\Tilde{y}_i^{t-1}+(v_y^t-v^t_{y_i})}{d_{i,t-1}},
\end{align}
where $\boldsymbol{v}^t_i=(v^t_{x_i}, v^t_{y_i})$ denotes the instantaneous velocity of the $i$'th device at aggregation round $t$. Referring to (\ref{Theorem}), to minimize the contribution of round $t$ to the ATL, we minimize $J_t+K_t$ by solving $\frac{\partial (J_t+K_t)}{\partial v_x^t}=0$ and $\frac{\partial (J_t+K_t)}{\partial v_y^t}=0$ and get the optimal instantaneous drone velocity ${\boldsymbol{\hat{v}}}^t=(\hat{v}_x^t, \hat{v}_y^t)$ as (\ref{Derivative5}). \color{black}According to (\ref{Derivative5}), the drone's optimal  instantaneous velocity is a weighted average of the devices' individual velocities. For the special case where different devices have similar dataset noise levels and channel conditions, the drone's optimal velocity reduces to the simple weighted average of the devices' individual velocities, weighted by the size of local datasets at devices. \color{black}Using (\ref{Derivative5}), we propose an alternative trajectory optimization technique based on the projected gradient method. In this technique, the drone starts from the location $({x^*}^{0},{y^*}^{0})$ optimized using our proposed \textbf{Algorithm 1} at $t=0$. Then, at each aggregation round $t$ we use (\ref{Derivative5}) to calculate $({{\hat{v}_x^t}}, {{\hat{v}_y^t}})$. Using projected gradient, we use ${{{{v^*}}_x^t}}=\min\{{{\hat{v}_x^t}}, v_{x, max}\}$ and ${{{{v^*}}_y^t}}=\min\{{{\hat{v}_y^t}}, v_{y, max}\}$ for the drone speed at aggregation round $t$ that satisfies the drone's maximum velocity constraint, i.e. $v^2_{max}=v^2_{x, max}+v^2_{y, max}$. 
Thereby we set $({x^*}^{t},{y^*}^{t})=({x^*}^{t-1}+{{{{v^*}}_x^t}},{y^*}^{t-1}+{{{{v^*}}_y^t}})$, and this process continues until the end of the training, i.e. for $T$ aggregation rounds. \color{black}\textbf{Algorithm 2} presents trajectory optimization with the proposed method.\color{black} 

Equations (\ref{Derivative5}) also explicitly drives effects of different learning (e.g., local dataset size and its noise level) as well as wireless factors (e.g., channel noise and fading, transmit power and bandwidth, etc.) on the optimal drone movement and its velocity. It should also be noted that\textbf{Algorithm 2} treats the aggregation rounds independently and thereby is suboptimum for trajectory optimization over the horizon of $T$ rounds. However, it is much simpler than the SQP-based method and the simulation results show sufficient quality of this solution leading to a significant performance improvement in comparison with other baselines.
\color{black}

%\textcolor{black}{\textit{\textbf{Corollary 2 : }}}Assuming $\Phi_1=\Phi_2\dots=\Phi_T$ and also $|\Phi|<1$, the ATL reduces to $ (\frac{2c_1}{L D} \sum_{i=1}^{N} D_i e_i+\frac{\eta M}{D^2} \sum_{i=1}^{N} D_i (1-e_i) \sigma^2_i)\frac{1-\Phi^{T}}{1-\Phi}$.

%so the Charging point optimization problem is converted to:
%\begin{equation}{\label{eq:APL_Optimization}}
%    \begin{aligned}
%        \min_{\boldsymbol{a_d^{T+1}}} \quad & (\frac{2c_1}{L D} \sum_{i=1}^{N} D_i e_i+\frac{\eta M}{D^2} \sum_{i=1}^{N} D_i (1-e_i) \sigma^2_i)\frac{1-\Phi^{T}}{1-\Phi}
%    \end{aligned}
%\end{equation}

%and also trajectory optimization problem becomes:

%\vspace{7mm}
%%%%%%%%%%%%%%%%%%%%%%%%%%%%%%%%%%%%%%%%%%%%%%%%%%%%%%
\section{Simulation Results and Analysis}\label{sec:Results}
%%%%%%%%%%%%%%%%%%%%%%%%%%%%%%%%%%%%%%%%%%%%%%%%%%%%%%
\textcolor{black}For our simulations, we consider a group of intelligent wireless devices collaborating to train a shared NN model with the help of a drone. We consider two simulation scenarios of collaborative image recognition for AIoT devices and semantic segmentation for autonomous vehicles. The general simulation parameters are listed in Table \ref{tbl:SisPars}. \color{black}We compare our results with two baselines. The first baseline is to position the drone at the geometrically weighted centroid of the devices for each aggregation round, where the weighting is selected proportional to the amount of data available to each user in order to position the drone closer to the device with more data. The second baseline is \cite{Rui} which designs the drone trajectory to maximize the average achievable communication rate, i.e. $\Sigma_{i=1}^{N}\log_{2}({1+\frac{{\mathbb{E}|h_{i}|^2 \rho_i}}{B{N_0}}})$. These two baselines are called respectively ``Weighted centroid" and ``Maximum rate \cite{Rui}".\color{black}

\color{black}

\begin{table}[t]
\centering
\caption{\color{black} Simulation parameters.}
\label{tbl:SisPars}
\begin{tabular}{|c|c|c|c|}
\hline
\color{black} Parameter & \color{black} Value & \color{black} Parameter & \color{black} Value \\ \hline
\color{black} $H$ & \color{black} $20m$ & \color{black} $N_0$ & \color{black} $-174$ dBm/Hz \\ \hline
\color{black} $\theta$ & \color{black} $0.053$ dB & \color{black} $\alpha$ & \color{black} $-3.4$ \\ \hline
%$c_1$ & $1$ & $c_2$ & $0.5$ \\\hline
\color{black} $f_c$ & \color{black} $1$ GHz & \color{black} $\nu_i$ & \color{black} $\in U[0.1,1]$ \\\hline
%$\kappa$ & $\in \{5,10,25\}$ & $\nu_i$ & $\in U[0.1,1]$ \\\hline
%$M$ & $28\times28$ & $\eta$ & $0.8$ \\\hline
\end{tabular}
\end{table}

\subsection{Collaborative image recognition for AIoT devices}

%$N=5$ intelligent wireless devices located in a $70 m \times 70 m$ square area

\color{black}In this scenario, we consider $N=5$ AIoT devices located in a $70$ m $\times 70$ m square area. Considering the limited battery capacity of AIoT devices and the fact that only a small portion of the device transmit power and battery capacity can be dedicated to collaborative learning, we assume $\rho_i=0.1$ mW, $B=2.5$MHz. For image recognition, we use a lightweight two-layer fully connected NN for hand-written digit recognition using MNIST samples. The NN consists of a hidden dense layer with $200$ neurons and Rectified Linear Unit (ReLU) activation, followed by an output dense layer with $10$ neurons. The training samples are heterogeneously distributed, with devices having various numbers of training samples and different data qualities, i.e., different peak-signal-to-noise-ratio (PSNR) values. We consider both identical (iid) and non-identical (non-iid) class distributions, wherein in the iid case all classes constitute a roughly balanced portion of the local device datasets, while this portion is severely unbalanced for the non-iid case. The earth mover's distance (EMD) measures this class heterogeneity, and its value is $0.2728$ and $0.7624$ for the iid and non-iid cases, respectively \cite{zhao2018federated}. Other simulation parameters are $M=28\times28, c_1=1, c_2=0.5, \eta=0.8$. In the following, we investigate both stationary and moving device scenarios, as well as the sensor noise effects.\color{black}

\subsubsection{Drone placement for stationary devices:} \noindent In this scenario, we assume stationary devices located in the $70$ m $\times 70$ m area uniformly at random as demonstrated in Fig. ~\ref{fig:Env}. Fig. ~\ref{fig:Env} also shows the number of data samples at each device, i.e., $D_i$, and the local data qualities represented by the PSNR values, i.e., $\zeta_i$. For stationary devices and drone, we find the optimal drone location by solving optimization (\ref{eq:ATL_placement}) using the fractional programming technique proposed in Section \ref{placement}.

%For stationary devices and drone, $\boldsymbol{p}^1=\hdots=\boldsymbol{p}^{T+1}=\boldsymbol{p}^d$ and the ATL is given in \textbf{\textit{Corollary 2}}, hence we find $\min_{\boldsymbol{p}^d} \quad (\frac{2c_1}{L D} \sum_{i=1}^{N} D_i e_i+\frac{\eta M}{D^2} \sum_{i=1}^{N} D_i (1-e_i) \sigma^2_i)\frac{1-\Phi^{T}}{1-\Phi}$.

%and shown in Fig. ~\ref{fig:Env} along with the optimum position determined by our proposed approach based on ATL minimization. 

The training curves presented in Fig. ~\ref{fig:pl_curve} show that both the convergence speed and the final accuracy of the trained model considerably improve when the drone position is optimized by the proposed approach. The improvement in the final accuracy of the trained model in comparison with the ``Maximum  rate" and ``Weighted centroid" baselines is roughly $4\%$ and $2\%$ for the iid case, and $4.5\%$ and $4.2\%$ for the non-iid case. To achieve a target accuracy of $75\%$ in the iid case, our proposed approach requires only $85$ aggregation rounds, whereas the ``Maximum rate" and ``Weighted centroid" baselines require $96$ and $104$ aggregation rounds, demonstrating $11.5\%$ and $18\%$ reduction, respectively. For the non-iid case, $75\%$ accuracy is achieved after only $77$ aggregation rounds with our proposed approach, but requires roughly $108$ and $123$ aggregations for the ``Maximum  rate" and ``Weighted centroid" baselines, representing $28.7\%$ and $37\%$ reduction, respectively. 

\begin{strip}
\rule{\textwidth}{0.75pt} 
\begin{align}\label{Derivative5}
    {\mathlarger{\hat{v}_x^t}} & {\mathlarger{=\frac{\sum_{i=1}^{N} \frac{D_i}{D} (\frac{2c_1}{L}-\frac{\eta M \sigma_i^2}{2L D}) (\frac{\theta B N_0 \alpha (d_{i,t-1})^{\alpha-2}}{\rho_i A_i} ) \exp{\left(-\frac{\theta B N_0}{\mathbb{E}|h_{i,t}|^2 \rho_i}\right)} (v^t_{x_i}-\Tilde{x}_i^{t-1})}{\sum_{i=1}^{N} \frac{D_i}{D} (\frac{2c_1}{L}-\frac{\eta M \sigma_i^2}{2L D}) (\frac{\theta B N_0 \alpha (d_{i,t-1})^{\alpha-2}}{\rho_i A_i} ) \exp{\left(-\frac{\theta B N_0}{\mathbb{E}|h_{i,t}|^2 \rho_i}\right)}},}} \\ \nonumber
    {\mathlarger{\hat{v}_y^t}}& {\mathlarger{=\frac{\sum_{i=1}^{N} \frac{D_i}{D} (\frac{2c_1}{L}-\frac{\eta M \sigma_i^2}{2L D}) (\frac{\theta B N_0 \alpha (d_{i,t-1})^{\alpha-2}}{\rho_i A_i} ) \exp{\left(-\frac{\theta B N_0}{\mathbb{E}|h_{i,t}|^2 \rho_i}\right)} (v^t_{y_i}-\Tilde{y}_i^{t-1})}{\sum_{i=1}^{N} \frac{D_i}{D} (\frac{2c_1}{L}-\frac{\eta M \sigma_i^2}{2L D}) (\frac{\theta B N_0 \alpha (d_{i,t-1})^{\alpha-2}}{\rho_i A_i} ) \exp{\left(-\frac{\theta B N_0}{\mathbb{E}|h_{i,t}|^2 \rho_i}\right)}}.}}
\end{align}

\end{strip}

\begin{figure}[t]
    \centering    \includegraphics[scale=.3]{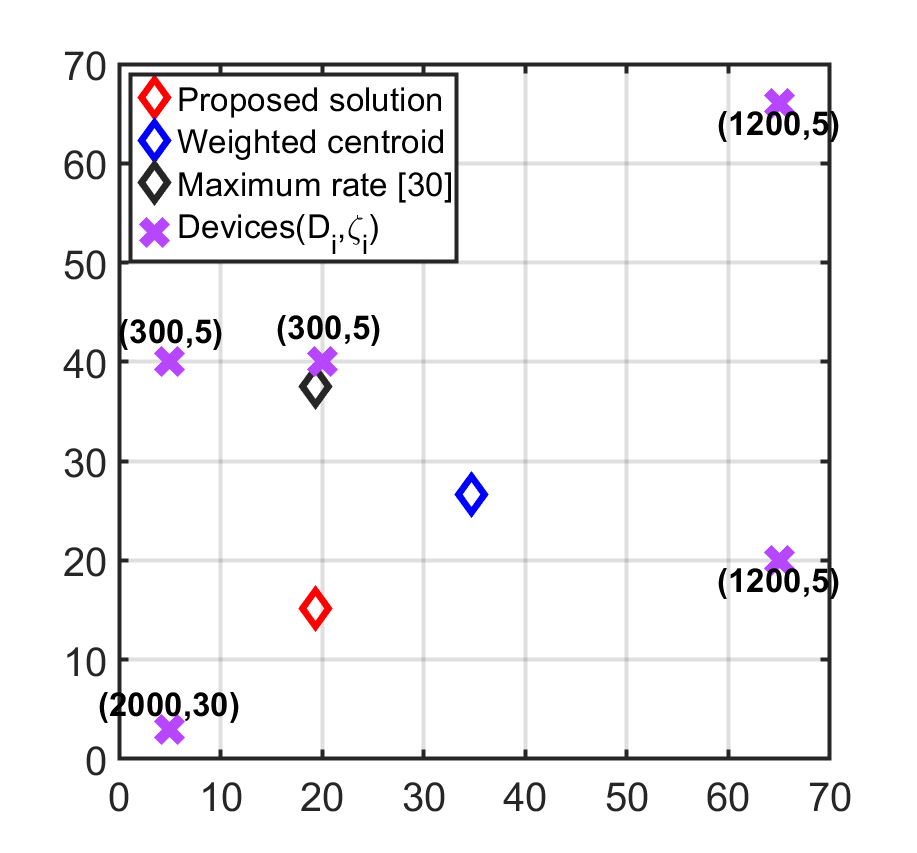}
    \caption{Stationary AIoT devices scenario.}
    \label{fig:Env}
\end{figure}

\color{black}It should be noted that handwritten digit recognition on MNIST is a simple task with a typical accuracy of above $99\%$ achievable in the ideal centralised learning scenario when there are no packet errors, the dataset is noiseless, and the model is trained on the whole 50000 training data in MNIST dataset. However, in our AIoT scenario due to: 1-The limited storage and computation capacity at the AIoT devices allowing much fewer data, i.e. $\sum_{i=1}^{N} D_i=5000$, for training; 2-The noisy datasets with PSNRs as low as 5dB; 3-Presence of the packet errors; 4-Heterogenious data with non-zero EMD causing weight divergence [31] in comparison with centralized learning; we get a lower final accuracy, i.e. $\sim 80\%$ in these curves.\color{black}

%\textcolor{black}{In iid case, the accuracy around $75\%$ is achieved after $85$ aggregation rounds, while this number is roughly $100$ and $104$ for ``Maximum  rate" and ``Weighted centroid" respectively,demonstrating $15\%$ and $18\%$ enhancement in terms of speed of convergence.In non-iid case,the same accuracy is achieved after $77$ aggregation rounds, whereas this number is roughly $110$ and $123$ for ``Maximum  rate" and ``Weighted centroid" baselines respectively,representing approximately $30\%$ and $37\%$ enhancement in terms of speed of convergence.}
%could be noticed in Non-IID scenario as our proposed approach reaches accuracy $67\%$ after  $46$ aggregation rounds,however this number is $71$ and $73$ for maximum rate  and data weighted centroid baselines,representing $35\%$ and $37\%$ enhancement in terms of speed of convergence. These demonstrate the benefits of our joint design that takes both Our approach considers joint sensor noise and packet error, while data weighted centroid considers none of them and maximum rate does not consider sensor noise.
%%%%%%%%%%%%%%%%%%%%%%%%%%%%%%%%%%%%%%%%%%%%

%In our simulation, we consider a minimum safety distance between ground user devices to be 10m denoting by $\zeta$. The data weighted centroid, Maximum rate and optimized charging point are shown in Fig. ~\ref{fig:Env}. Beside each user the size of its dataset,optimized error and variance of sensor noise are displayed.The curves are averaged over 4 training runs.

\begin{figure}[t]
    \centering    \includegraphics[scale=.25]{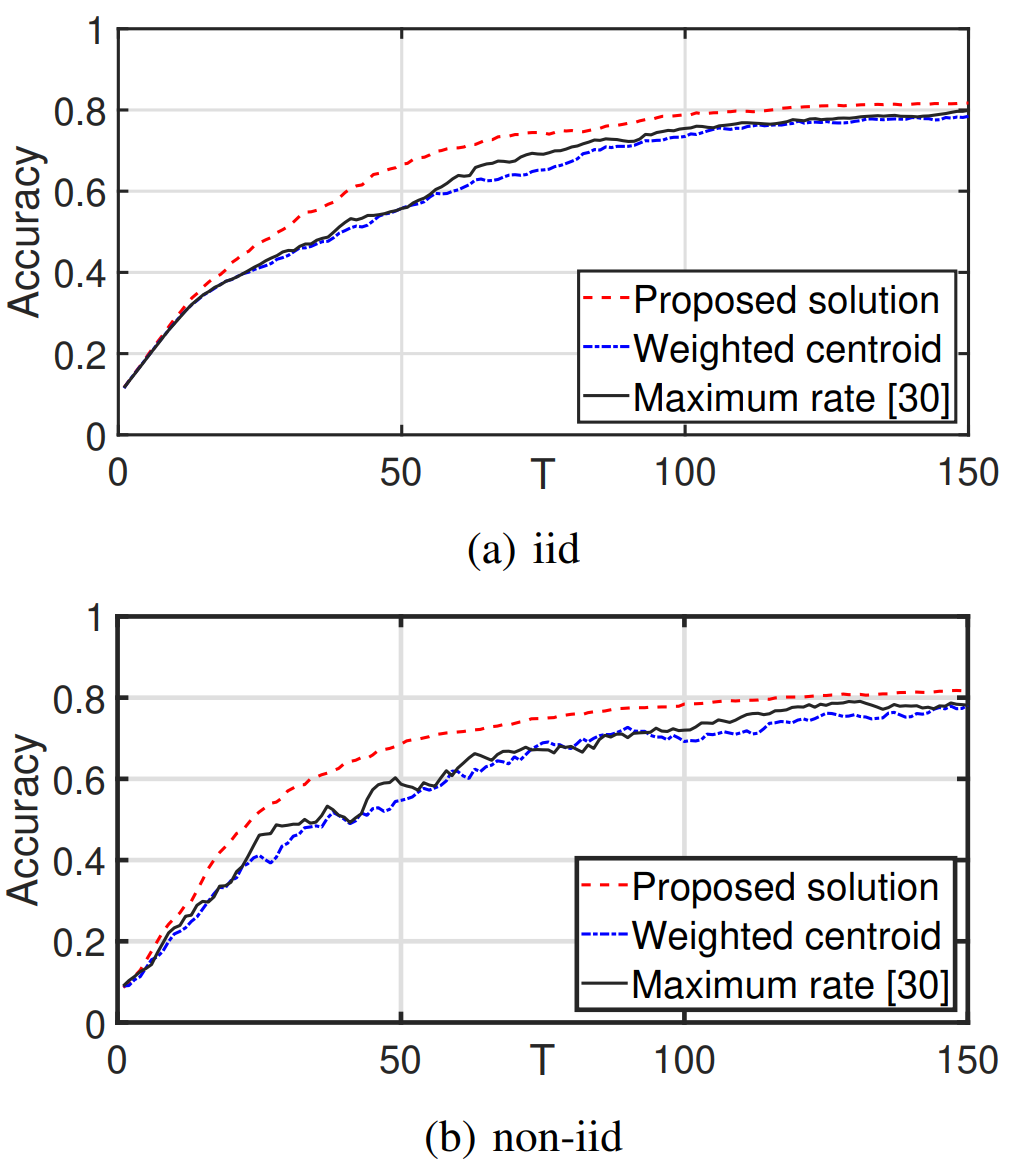}
    \caption{Training curves for drone-assisted collaborative learning on stationary AIoT devices.}
    \label{fig:psnr_curve}
\end{figure}

\subsubsection{Trajectory optimization for moving devices:}
\noindent In this scenario, we consider moving devices, and determine the optimal trajectory points by solving optimization (\ref{eq:ATL_Optimization}) using SQP as proposed in section \ref{trajectory}. We pick the initial device locations in the area uniformly at random and assign random speeds to each device according to $v_{x_i}, v_{y_i} = 0.1 \times U[0,1]$, where $U[.,.]$ is the uniform distribution. The start and end positions of the devices and the drone is shown in Fig. \ref{fig:trajectory_presentation}. \color{black}As shown in Fig. 2 and Fig. 5, the optimal placement and trajectory for the drone is closer to device 5 which has a larger and less noisy dataset with a larger PSNR of $\zeta_5=30$dB. It is clear in these figures that the other two baselines (i.e. Weighted centroid, and Maximum rate [30]) that do not consider the sensor noise effects result in totally different placements and trajectories which are suboptimum and result in a lower final accuracy for the trained model as shown in Fig. ~\ref{fig:pl_curve} and Fig. ~\ref{fig:trajectory_optimization}.\color{black}

\color{black}
\begin{figure}[t]
    \centering    \includegraphics[scale=.25]{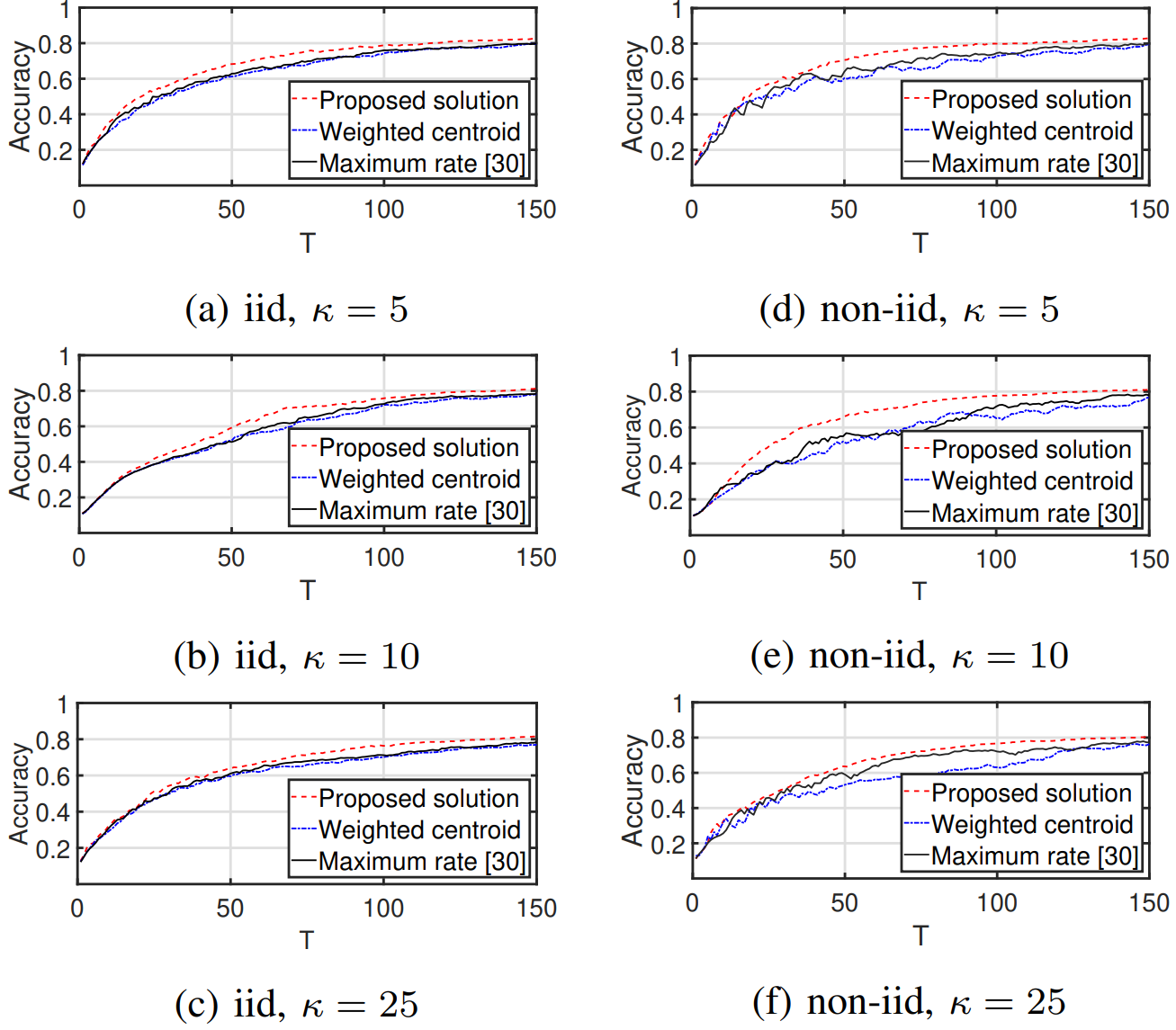}
    \caption{Learning curves for moving AIoT devices scenario.}
    \label{fig:psnr_curve}
\end{figure}

The training curves presented in Fig. ~\ref{fig:trajectory_optimization} demonstrate that both the convergence speed and the final accuracy of the trained model considerably improve when the drone trajectory is optimized by the proposed approach. The improvement in the final accuracy of the trained model in comparison with the ``Maximum  rate" and ``Weighted centroid" baselines is given in Table ~\ref{tab:trajectory_imp}. The final model accuracy increases with smaller $k$ for both iid and non-iid cases. In the iid case with $\kappa=5$, a target accuracy around $75\%$ is achieved after $65$ aggregation rounds with the proposed approach, whereas this roughly requires $85$ aggregation rounds for the two other baselines, demonstrating $18.7\%$ improvement in terms of the speed of convergence. Similarly in the non-iid case with $\kappa=5$, a target accuracy of around $75\%$ is reached after $65$ aggregation rounds, while this is around $82$ and $99$ aggregation rounds for the ``Maximum  rate" and ``Weighted centroid" respectively, representing approximately $21\%$ and $34\%$ improvement in terms of the convergence speed.

%%%%%%%%path planning%%%%%%%%
\begin{figure}[t]
    \centering    \includegraphics[scale=.6]{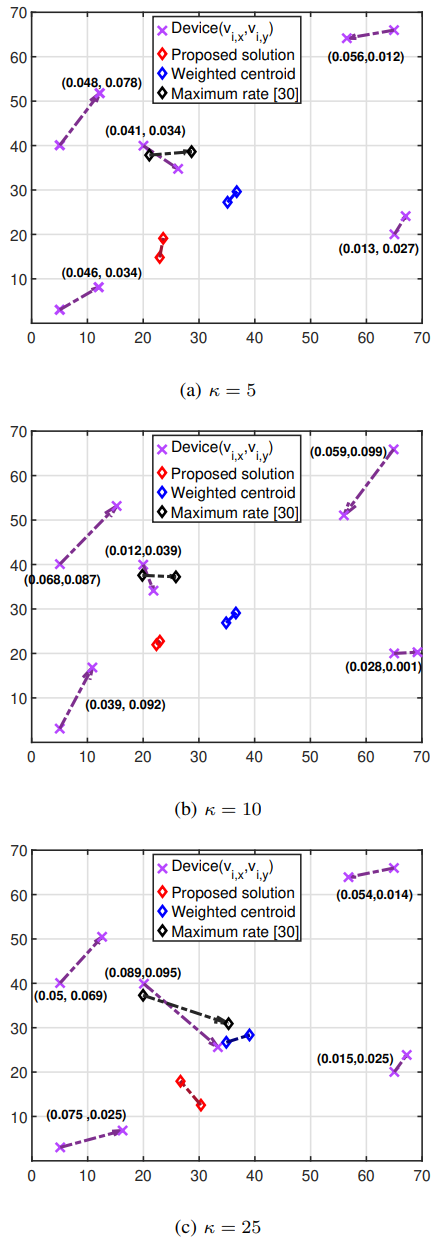}
    \caption{Trajectory optimization for collaborative drone-assisted image recognition on moving AIoT devices. The start and end positions of mobile devices and the drone for $k=5, 10, 25$.}
    \label{fig:trajectory_presentation}
\end{figure}

\begin{table}[t]
\centering
\caption{\color{black}Comparisons of the final accuracy with baselines for drone-assisted collaborative image recognition on AIoT devices.}
\label{tab:trajectory_imp}
\resizebox{\columnwidth}{!}{%
\begin{tabular}{|ll|l|l|l|}
\hline
\multicolumn{2}{|l|}{} & \color{black} $k=5$ & \color{black} $k=10$ & \color{black} $k=25$ \\ \hline
\multicolumn{1}{|c|}{\multirow{2}{*}{\color{black} iid}} & \color{black} Proposed & \color{black} 83.4\% & \color{black} 82.1\% & \color{black} 80.7\% \\ \cline{2-5}
\multicolumn{1}{|c|}{} & \color{black} Weighted centroid & \color{black} 78.8\% & \color{black} 78.1\% & \color{black} 76.9\% \\ \cline{2-5}
\multicolumn{1}{|c|}{} & \color{black} Maximum  rate \cite{Rui} & \color{black} 79.6\% & \color{black} 78.4\% & \color{black} 77.6\% \\ \hline
\multicolumn{1}{|c|}{\multirow{2}{*}{\color{black} Non-iid}} & \color{black} Proposed & \color{black} 82.7\% & \color{black} 81.3\% & \color{black} 80.1\% \\ \cline{2-5}
\multicolumn{1}{|c|}{} & \color{black} Weighted centroid & \color{black} 77.9\% & \color{black} 77.1\% & \color{black} 76.0\% \\ \cline{2-5}
\multicolumn{1}{|c|}{} & \color{black} Maximum  rate \cite{Rui} & \color{black} 78.5\% & \color{black} 77.7\% & \color{black} 77.0\% \\ \hline
\end{tabular}%
}
\end{table}

\color{black}Let us also compare the results in Table~\ref{tab:trajectory_imp} with the average accuracy achievable when we use a fixed (i.e., non-mobile) model aggregator located at a random
position in the $70m \times 70m$ area at the same altitude $H=20m$. This accuracy is $75.1$ and $73.6$ for the iid and non-iid cases, respectively. The improvements we achieve in Table~\ref{tab:trajectory_imp} in comparison with these numbers shows the benefits of mobility of the aggregator and is due to our proposed trajectory optimization algorithm that is aware of both the dataset size and qualities as well as the channel conditions. As shown in Fig. 2, if a device has a larger dataset of clean data, with our proposed algorithm, the drone will hover closer to that device to reduce the corresponding packet errors, thereby improving the learning performance.\color{black}

%%%%%%%%%%%%%%%%%%%%%%%%%%%%%%%%
\color{black}
\subsubsection{Packet error rate effects:}
In this subsection, we investigate effects of the packet errors on the convergence speed and final accuracy of collaborative learning. We study effects of the packet errors on the ATL in Fig. \ref{PER}, where the drone and the devices are fixed, the sensor noise levels are $\zeta_1=\hdots=\zeta_5=5$ dB, and all the other simulation parameters are the same as in Subsection IV.A.1. We plot the ATL for various $D_5/D$ values where the packet error rate for device 5, i.e. $e_5$, spans the large range of values $0.0001$ to $0.01$. As the packet error rate increases, the ATL increases thereby degrading the final accuracy of the trained NN.

To study effects of the packet errors on the convergence speed and final accuracy, we train our classification NN for the iid case where the drone/devices are stationary and the packet error rate for device 5 gets two different values, i.e. $e_5=0.01, 0.1$. Fig. \ref{PER1} provides the resulting learning curves. We get a $2.7$\% improvement in the final accuracy of the trained model as $e_5$ decreases from $0.1$ to $0.01$. Moreover, to achieve a target accuracy of $75\%$, we need 113 aggregation rounds when $e_5=0.1$, while we need 81 aggregation rounds when $e_5=0.01$ which shows slower convergence as the packet error rate increases. This figure and the corresponding paragraph is added to Subsection IV.A.3 of the revised manuscript and provided here for easier reference.\color{black}

\begin{figure}[t]
         \centering         
         \includegraphics[scale=.27]{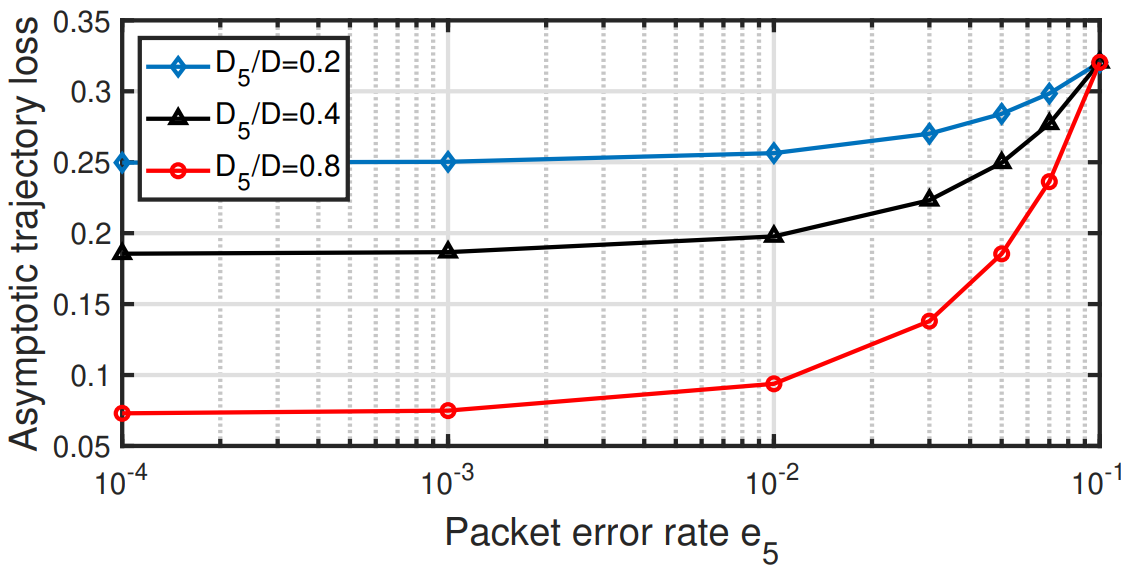}
         \caption{\color{black}Effects of the packet error rate on the asymptotic trajectory loss.}
         \label{PER}
    \end{figure}

    \begin{figure}[t]
         \centering         
         \includegraphics[scale=.25]{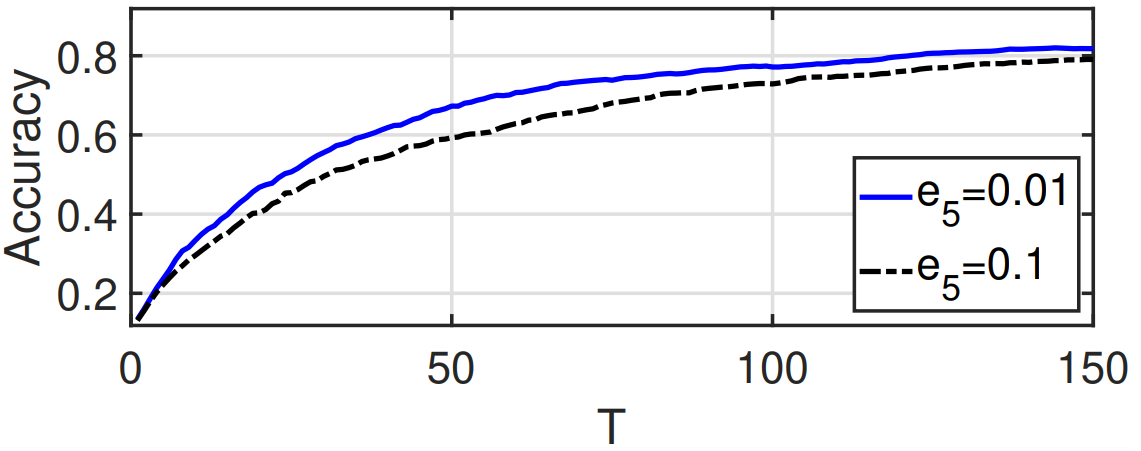}
         \caption{\color{black}Effects of the packet error rate on the learning curves.}
         \label{PER1}
    \end{figure}

\subsubsection{Sensor noise effects:}
Next, we investigate the impact of the sensor noise levels on the convergence speed and final performance of the trained NN. \color{black}Fig. \ref{PSNR} studies effects of the dataset noise and its size on the asymptotic trajectory loss. This figure plots the ATL versus the PSNR of device 5 $\zeta_5$, when $\zeta_1=\hdots=\zeta_4=5$ dB for three cases where device 5 has 20\%, 40\%, and 80\% of the total data. As $\zeta_5$ increases, the ATL decreases thereby leading to a better final performance of the trained model. The changes are more significant for $\frac{D_5}{D}=0.8$.\color{black}

\noindent Next, we investigate the impact of the sensor noise levels on the convergence speed and final performance of the trained NN. We plot the learning curves for a trajectory optimization scenario in Fig. ~\ref{fig:psnr_curve}, where the number of trajectory points $k=10$, the PSNR values $\{\zeta_i\}_{i=1}^{4}=5 \mathrm{dB}$ and is kept fixed while $\zeta_5 \in \{0, 5, 30\} \mathrm{dB}$. As Fig. ~\ref{fig:psnr_curve} shows, the sensor noise level considerably affects the final accuracy of the trained NN, but does not affect the convergence speed much, which confirms our convergence analysis in the Theorem \ref{thm:Thrm}.

 \begin{figure}[t]
         \centering         
         \includegraphics[scale=.25]{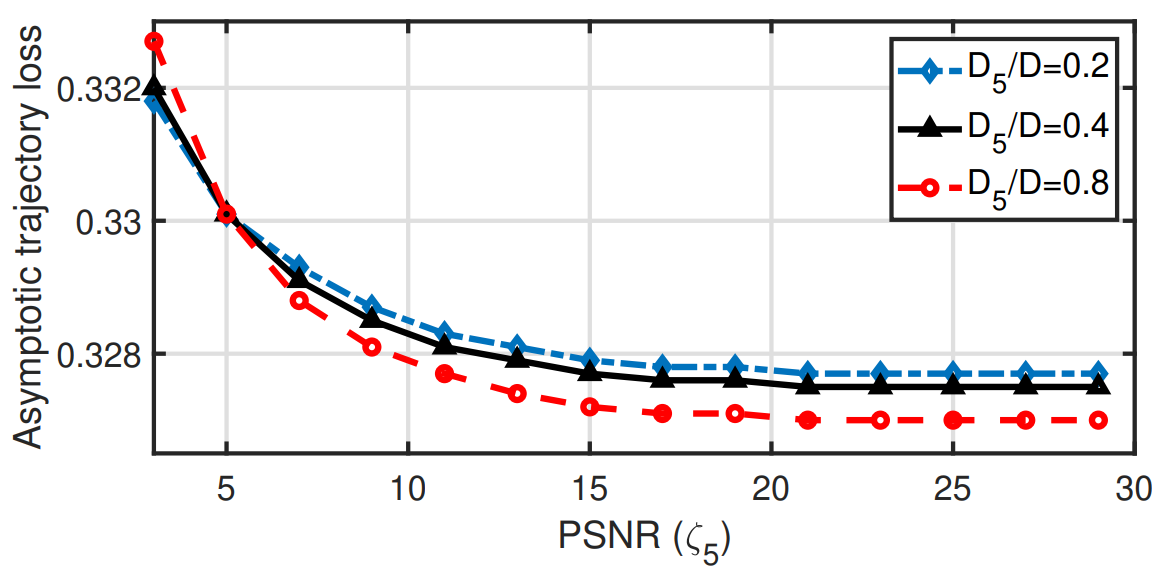}
         \caption{\color{black}Effects of the local dataset noise on asymptotic trajectory loss.}
         \label{PSNR}
\end{figure}
\color{black}

\begin{figure}[t]
    \centering    \includegraphics[scale=.65]{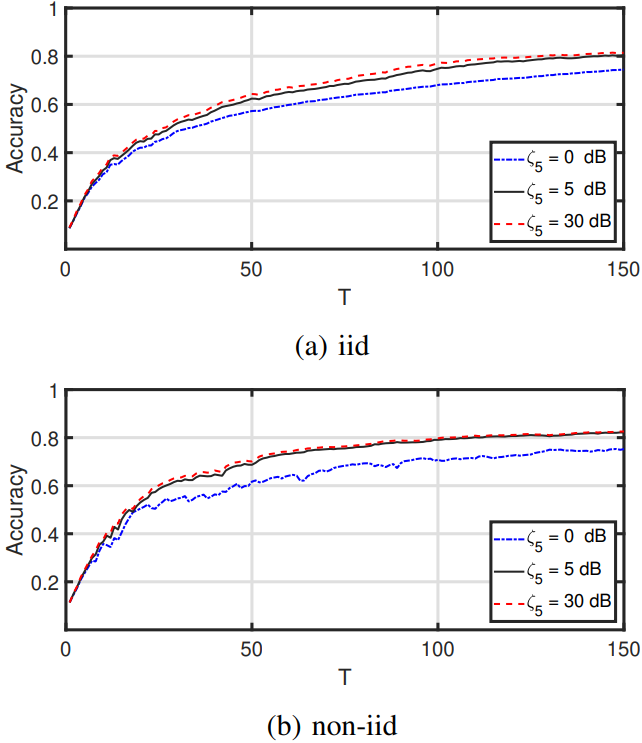}
    \caption{Learning curves for different PSNR values on moving AIoT devices.}
    \label{fig:psnr_curve}
\end{figure}

%\begin{figure}[t]
%     \centering
%     \begin{subfigure}[b]{0.4\textwidth}
%         \centering
%         \includegraphics[width=\textwidth]{}
%         \caption{iid}
%         \label{fig:psnr_iid}
%     \end{subfigure}
%     \hfill
%     \begin{subfigure}[b]{0.4\textwidth}
%         \centering
%         \includegraphics[width=\textwidth]{}
%         \caption{non-iid}
%         \label{fig:psnr_non_iid}
%     \end{subfigure}
%        \caption{Learning curves for different PSNR values on moving AIoT devices.}
%        \label{fig:psnr_curve}
%\end{figure}

\begin{table*}[t]
\centering
\caption{\color{black}Classification accuracy for drone-assisted collaborative image recognition on AIoT devices.}
\label{tbl:WalidCompare}
%\resizebox{\columnwidth}{!}{%
\begin{tabular}{|ll|l|l|l|l|l|l|l|}
\hline
\multicolumn{2}{|l|}{} & $T=1$ & $T=15$ & $T=30$ &  $T=60$ & $T=90$ &  $T=120$ &  $T=150$ \\ \hline
\multicolumn{1}{|c|}{\multirow{2}{*}{iid}} & Proposed & 11.60\% & 41.14\% & 54.49\% & 66.97\% &  74.91\% &  78.33\% &  80.85\% \\ \cline{2-9} 
\multicolumn{1}{|c|}{} & Noise unaware \cite{Joint} & 11.51\% & 39.66\% & 52.14\% &  64.15\% &  71.91\% &  77.51\% & 79.73\% \\ \hline
\multicolumn{1}{|l|}{\multirow{2}{*}{Non-iid}} & Proposed & 10.84\% & 38.20\% & 51.98\% & 65.32\% &  72.75\% &  77.27\%  & 80.20\% \\ \cline{2-9} 
\multicolumn{1}{|l|}{} & Noise unaware \cite{Joint} & 10.83\% & 36.38\% & 51.39\% &  63.99\%  & 71.54\%  & 76.13\%  & 78.81\% \\ \hline
\end{tabular}%
%}
\end{table*}

We also compare the NN accuracy when the drone trajectory is optimized using our noise-aware approach (\ref{eq:ATL_Optimization}) versus \cite{Joint}, in Table. ~\ref{tbl:WalidCompare} for $k=25$. This Table shows a considerable improvement by our noise-aware approach in the accuracy of the trained NN for various number of aggregation rounds for both the iid and non-iid class data distributions. For $T=150$, the improvement is 1.4 \% and 1.8 \% for the iid and non-iid cases, respectively. \color{black}In terms of the convergence speed, these results demonstrate an   accuracy of $64\%$ achievable with roughly $50$ aggregation rounds in our proposed solution, while the same accuracy would require at least $60$  aggregation rounds for \cite{Joint}, demonstrating $~17\%$ improvement in the drone's hovering time, battery usage, and communication overhead for collaborative learning. \color{black}

\color{black}

%%%%%%%%%%%%%%%%%%%%%%%%%%%%%%%%
\color{black}
\subsubsection{Drone altitude and speed effects:}
In this subsection, we study relations between the drone flight characteristics, i.e. altitude and maximum speed, on the learning performance. To this end, we evaluate the Asymptotic Trajectory Loss (ATL) (as defined in III.B), which is a surrogate indicator of the corresponding final performance of the trained NN when the drone takes a specific trajectory during training. In these simulations, the data is iid and all other simulation parameters stay the same as IV.A.2. 

\color{black}Fig. \ref{height} provides the ATL for various drone altitudes. In Fig. \ref{height}, we have considered four Values for the drone altitude, $H={5, 10, 15, 20}m$ and assume $\zeta_{i,t} \approx 1$ for all cases. As $H$ increases, the average channel gain $\mathbb{E}|h_{i,t}|^2$ decreases, thereby increasing the packet error rates $e_{i,t}$. Referring to \textbf{Theorem 1}, the increased packet error rate increases the ATL, thereby reducing the final accuracy of the trained NN. Fig. \ref{height} demonstrates a consistent reduction in the ATL values by our proposed method in comparison with the benchmarks. The average ATL improvement by our proposed method is $47\%$ and $40\%$ in comparison with the ``Weighted centroid" and ``Maximum rate [30]" benchmarks, respectively.\color{black}

 \begin{figure}[t]
         \centering         
         \includegraphics[scale=.25]{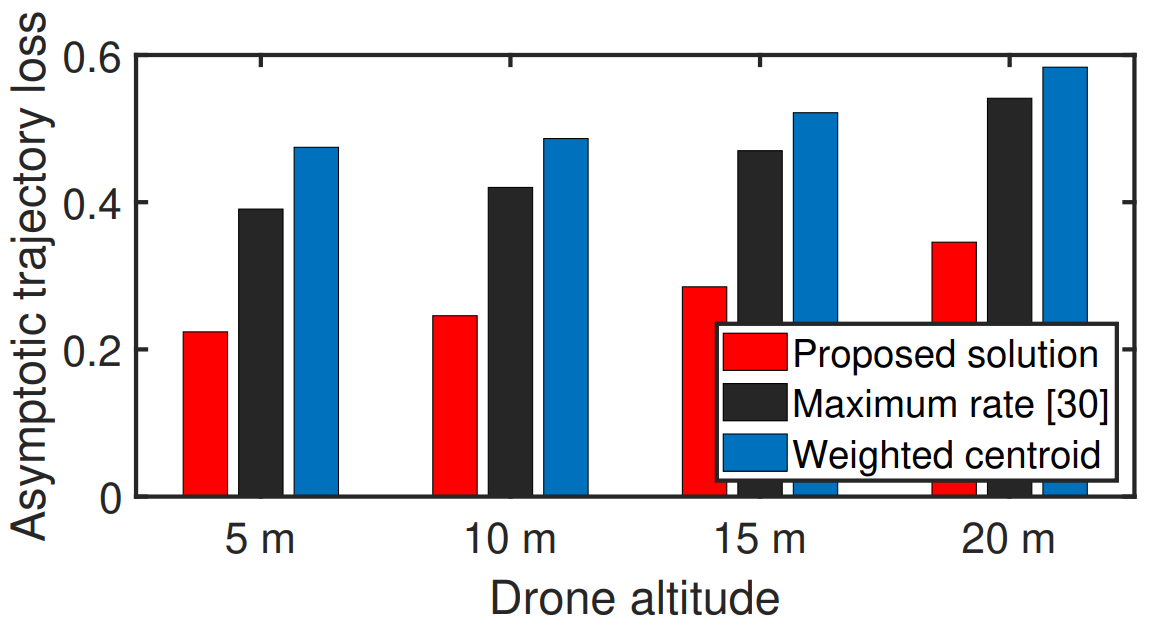}
         \caption{\color{black}The asymptotic trajectory loss for various drone altitudes.}
         \label{height}
\end{figure}
\color{black}

\color{black}

\color{black}Fig. \ref{vmax} plots ATL versus $v_{max}$ for two different altitudes $H=5, 20$m. As $v_{max}$ increases, the ATL decreases leading to a better performance for the trained NN. However, for larger $v_{max}$ values, i.e. $v_{max} \le 4 $, the constraint corresponding to the maximum drone speed becomes inactive and the ATL approaches a constant value.\color{black}

\begin{figure}[t]
         \centering         
         \includegraphics[scale=.25]{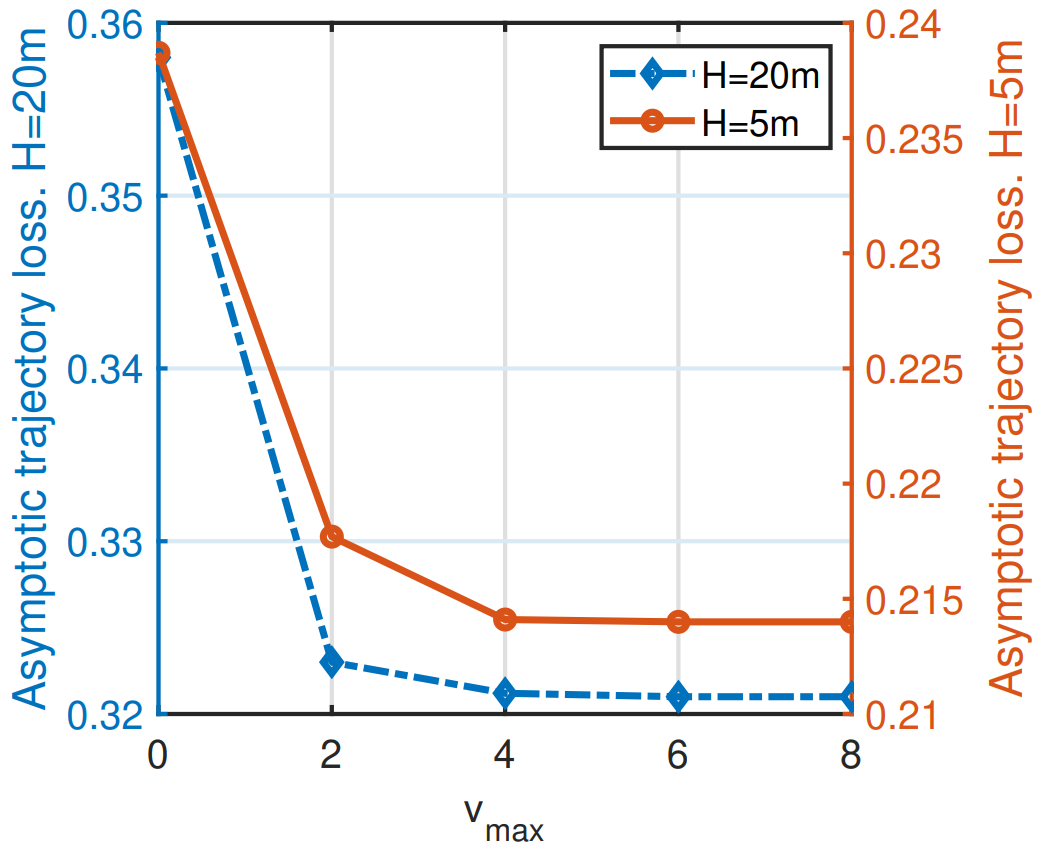}
         \caption{\color{black}Effects of $v_{max}$ on the asymptotic trajectory loss.}
         \label{vmax}
\end{figure}

%\vspace{20mm}

\color{black}

%\begin{align}\label{Derivative5}
%   {{{v^*}_x^t}} & {{=\frac{\sum_{i=1}^{N} \frac{D_i}{D} (\frac{2c_1}{L}-\frac{\eta M \sigma_i^2}{D}) (\frac{\theta B N_0 \alpha (d_{i,t-1})^{\alpha-2}}{\rho_i A_i} ) \exp{\left(-\frac{\theta B N_0}{\mathbb{E}|h_{i,t}|^2 \rho_i}\right)} (v^t_{x_i}-\Tilde{x}_i^{t-1})}{\sum_{i=1}^{N} \frac{D_i}{D} (\frac{2c_1}{L}-\frac{\eta M \sigma_i^2}{D}) (\frac{\theta B N_0 \alpha (d_{i,t-1})^{\alpha-2}}{\rho_i A_i} ) \exp{\left(-\frac{\theta B N_0}{\mathbb{E}|h_{i,t}|^2 \rho_i}\right)}},}} \\ \nonumber
%   {{{v^*}_y^t}}& {{=\frac{\sum_{i=1}^{N} \frac{D_i}{D} (\frac{2c_1}{L}-\frac{\eta M \sigma_i^2}{D}) (\frac{\theta B N_0 \alpha (d_{i,t-1})^{\alpha-2}}{\rho_i A_i} ) \exp{\left(-\frac{\theta B N_0}{\mathbb{E}|h_{i,t}|^2 \rho_i}\right)} (v^t_{y_i}-\Tilde{y}_i^{t-1})}{\sum_{i=1}^{N} \frac{D_i}{D} (\frac{2c_1}{L}-\frac{\eta M \sigma_i^2}{D}) (\frac{\theta B N_0 \alpha (d_{i,t-1})^{\alpha-2}}{\rho_i A_i} ) \exp{\left(-\frac{\theta B N_0}{\mathbb{E}|h_{i,t}|^2 \rho_i}\right)}}.}}
%\end{align}

\begin{figure*}[t]
     \centering
     \includegraphics[scale=0.55]{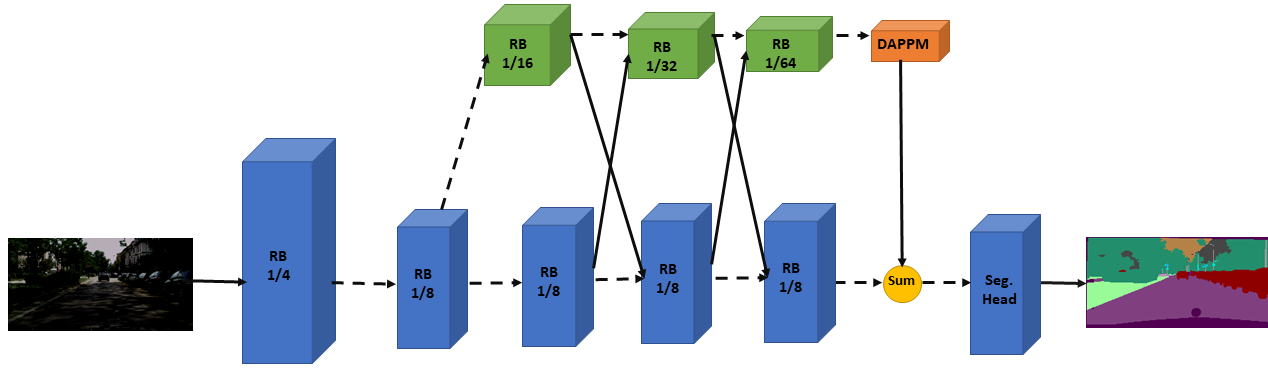}
     \caption{\color{black}The overall architecture of DDRNets \cite{hong2021deep} for image semantic segmentation.}
     \label{fig:DDRNets}
\end{figure*}

\subsection{Collaborative semantic segmentation for autonomous vehicles}
Semantic segmentation is an essential learning task for autonomous vehicles to gain an accurate understand of their surrounding area. In this section, we evaluate performance of our proposed trajectory optimization for collaborative training of a DNN-based semantic segmentation model on autonomous vehicles. \color{black}In this scenario, we have $N=5$ moving vehicles located in a $350$ m $\times 350$ m square area. The velocities are assigned randomly according to $v_{x_i}, v_{y_i} = U[0,13]$, where $U[.,.]$ is the uniform distribution, and we assume $\rho_i=8$ mW, $B=1.5$ MHz, $\kappa=10$. The amount of data at each vehicle and the PSNR values are $D_i=\{1190, 178, 179, 714, 714\}$  and $\zeta_i=\{30,5,5,5,5\}$, respectively. For performance comparison, we use both pixel accuracy (PA) and intersection-over-union (IoU) metrics. PA captures the percentage of pixels that are accurately classified in the image. IoU is a common metric for object detection and semantic image segmentation tasks that quantifies the degree of overlap between the predicted and ground truth object class masks and can better indicate accuracy in the placement of the predicted objects. For object class $i$, the IoU metric is defined as $\mathrm{IoU} (i) = \frac{A_o(i)}{A_u(i)}$ where $A_o(i)$ and $A_u(i)$ are the areas of overlap and union between the ground truth and predicted masks for class $i$, respectively.\color{black}

We train our model on Cityscapes dataset which is a benchmark for semantic understanding of urban street scenes \cite{Cordts2016Cityscapes}. This dataset consists of $2975$ and $500$ sample street scenes for training and testing, respectively. The dataset includes $8$ different semantic categories, namely, construction, object, nature, sky, human, vehicle, flat, and void, and a total of $20$ object classes. Each category may include several object classes, e.g., the vehicle category covers $``Bicycle"$, $``Motorcycle"$, $``Train"$, $``Trailer"$, $``Caravan"$, and $``Car"$ object classes.

We adopt deep dual-resolution networks (DDRNets) that were recently proposed for real-time semantic segmentation \cite{hong2021deep}. Deep dual-resolution networks (DDRNets) are composed of two deep branches, and a deep information extractor called Deep aggregation pyramid pooling module (DAPPM) to enlarge effective receptive fields and fuse multi-scale context based on low-resolution feature maps as shown in Fig. \ref{fig:DDRNets}. RB and RBB represent sequential and single residual bottleneck blocks. Information is extracted from low-resolution feature maps e.g., $1/8$ denotes high-resolution branch create feature maps whose resolution is 1/8 of the input image resolution. The black solid lines demonstrate the data processing path (including up-sampling and down-sampling) and black dashed lines denote information paths without data processing. $``Sum"$ denotes a
point-wise summation \cite{hong2021deep}.

\color{black}Table \ref{tab:SemPA} provides the values of average PA along the training with different trajectory optimization methods. Our proposed approach not only speeds up the training, but also achieves a better final performance. The maximum PA value achievable by centralized training using the Adam optimizer in this case is 71.35\% which is very close to the performance of our proposed solution. Fig. \ref{fig:cityscape_validation} compares the visual quality of segmentation between our proposed approach and the three other baselines for two sample street scenes in the dataset. Figs. \ref{fig:1_usg} and \ref{fig:2_usg} show the original scenes, while Figs. \ref{fig:1_sg} and \ref{fig:2_sg} show the ground truth segmented scenes. 
\cref{fig:op_1_gt,fig:op_2_gt,fig:op_comm_1_gt,fig:op_comm_2_gt,fig:wc_1_gt,fig:wc_2_gt,fig:walid_1_gt,fig:walid_2_gt} present the corresponding segmentation outputs when the drone trajectory is optimized using our proposed approach and the three other baselines. From this figure, we observe that our proposed noise-aware trajectory optimization approach provides more accurate segmentations. For example in the first scene, the proposed approach better detects the \color{black}$``Sidewalk"$\color{black} on the left while the ``Maximum  rate" and ``Weighted centroid" baselines fail to detect the  \color{black}$``Sidewalk"$\color{black} at all. Similarly, in the second scene, our proposed approach better detects the \color{black}$``Terrain"$\color{black} on the left and better locates the $``Car"$s in comparison with the other baselines. The approach based on \cite{Joint} mistakenly detects $``Sidewalk"$ on the right. More important is the accurate detection and placement of cars to minimize the risk of accidents and injuries. To numerically compare the detection and placement accuracy of cars, we report the IoU value for the target class $``Car"$ as well as the general PA metric below each image. Our proposed approach outperforms other baselines not only in terms of the PA, but also more importantly in terms of the IoU for the target class \color{black}$``Car"$\color{black}. 

\begin{table*}[t]
            \centering
            \caption{\color{black}PA for collaborative drone-assisted semantic segmentation with different trajectory optimization methods.}
            \label{tab:SemPA}
            %\resizebox{\columnwidth}{!}{%
            \begin{tabular}{|l|l|l|l|l|l|l|l|l|l|}
            \hline
                                   & $T=1$    & $T=50$     & $T=150$    & $T=250$    & 
                                   $T=350$    &
                                   $T=450$    \\ \hline
            Proposed solution      & 22.34\% & 68.38\% & 68.02\% & 69.55\% & 69.94\% & 71.12\% \\ \hline
            Weighted centroid & 15.56\% & 64.33\% & 65.47\% & 66.52\% & 66.98\% & 67.58\% \\ \hline
            Maximum  rate \cite{Rui}           & 16.25\% & 61.94\% & 66.64\% & 66.68\% & 65.77\% & 67.53\%  \\ \hline
            {Noise unaware \cite{Joint}}               & 14.55\% & 66.72\% &66.77\% & 68.37\% & 69.05\% & 69.68\%  \\ \hline
            
            %{Central learning}               & 54.73\% & 69.01\% &70.01\% & 70.22\% & 70.45\% & 71.35\%  \\ \hline
            \end{tabular}
            %}
            \end{table*}

%\begin{table}[H]
%\centering
%\caption{Accuracy comparison with \cite{Joint}}
%\label{tbl:WalidCompare}
%\resizebox{\columnwidth}{!}{%
%\begin{tabular}{|ll|l|l|l|l|l|l|l|l|l|l|l|}
%\hline
%\multicolumn{2}{|l|}{} & T=1 & T=15 & T=30 & T=45 & T=60 & T=75 & T=90 & T=105 & T=120 & T=135 & T=150 \\ \hline
%\multicolumn{1}{|c|}{\multirow{2}{*}{IID}} & Proposed & 11.60\% & 41.14\% & 54.49\% & 62.42\% & 66.97\% & 71.36\% & 74.91\% & 76.83\% & 78.33\% & 79.88\% & 80.85\% \\ \cline{2-13} 
%\multicolumn{1}{|c|}{} & {[}10{]} & 11.51\% & 39.66\% & 52.14\% & 59.13\% & 64.15\% & 69.57\% & 71.91\% & 75.59\% & 77.51\% & 79.03\% & 79.73\% \\ \hline
%\multicolumn{1}{|l|}{\multirow{2}{*}{Non-IID}} & Proposed & 10.84\% & 38.20\% & 51.98\% & 59.94\% & 65.32\% & 69.23\% & 72.75\% & 75.65\% & 77.27\% & 79.11\% & 80.20\% \\ \cline{2-13} 
%\multicolumn{1}{|l|}{} & {[}10{]} & 10.83\% & 36.38\% & 51.39\% & 59.44\% & 63.99\% & 68.43\% & 71.54\% & 73.76\% & 76.13\% & 77.92\% & 78.81\% \\ \hline
%\end{tabular}%
%}
%\end{table}

% The impact of the drone height $H$ and bandwidth $B$ on the asymptotic loss ATL is shown in Fig. ~\ref{fig:compare_B&H}, where an increase in $H$ or $B$ leads to increased packet error rate, thereby increasing the asymptotic loss.
%The impact of sensor noise is also investigated in Fig. 
%%%%%%%%%%%%%%%%%%%%%%%%%%%%%%%
%%%%%%%%%%%%%%%%%%%%%%%%%%%%%%%

\section{Conclusions}\label{sec:Conclusions}
In this paper, we have considered collaborative training of a shared NN model by intelligent devices and orchestrated by a drone. We have then derived the convergence rate and final performance of the learning algorithm considering device data heterogeneity, sensor noise levels, and communication errors. Using this analysis, we have defined the ATL concept and, then, used it to find the drone trajectory that maximizes the final accuracy of the trained NN. The results show that our proposed trajectory optimization approach significantly improves the convergence rate and final performance in comparison with baselines that only consider device data characteristics  or wireless channel conditions and is specifically effective in saving the drone's hovering time, communication overhead, and battery usage.
%%%%%%%%%%%%%%%%%%%%%%%%%%%%%%%%%%%%%%%%%%%%%%%%%%%%%%

%\vspace{5mm}

\begin{appendices}\label{app:appendices}
        \centering {APPENDIX.}
        %\vspace{-20mm}
        \begin{proof}[Proof of Theorem \ref{thm:Thrm}]\label{app:xx}
            \begin{FlushLeft}
            With assumption I, let us express $F(\boldsymbol{w}_{t+1})$ using its second-order Taylor expansion as 
            \begin{align}
                F(\boldsymbol{w}_{t+1})&=F(\boldsymbol{w}_t)+(\boldsymbol{w}_{t+1}-\boldsymbol{w}_t)^T \nabla F(\boldsymbol{w}_t)\\ \nonumber
                &+\frac{1}{2} (\boldsymbol{w}_{t+1}-\boldsymbol{w}_t)^T \nabla^2 F(\boldsymbol{w}_t) (\boldsymbol{w}_{t+1}-\boldsymbol{w}_t),\\ \nonumber
                & \le F(\boldsymbol{w}_t)+(\boldsymbol{w}_{t+1}-\boldsymbol{w}_t)^T \nabla F(\boldsymbol{w}_t)+\frac{L}{2}\|\boldsymbol{w}_{t+1}-\boldsymbol{w}_t\|^2,
            \end{align}
            where the inequality stems from assumption III. Given the learning rate $\lambda=\frac{1}{L}$, the expected loss function $\mathbb{E}(F(\boldsymbol{w}_{t+1}))$ can be expressed as
            \begin{align}\label{mainbound}
                \mathbb{E}(F(\boldsymbol{w}_{t+1})) &\le \mathbb{E} (F(\boldsymbol{w}_t)-\lambda(\nabla F(\boldsymbol{w}_t)-\boldsymbol{J})^T\nabla F(\boldsymbol{w}_t)\\ \nonumber
                &+\frac{L \lambda^2}{2}\|\nabla F(\boldsymbol{w}_t)-\boldsymbol{J}\|^2),\\ \nonumber
                &=\mathbb{E}(F(\boldsymbol{w}_t))-\frac{1}{2L}\|\nabla F(\boldsymbol{w}_t)\|^2+\frac{1}{2L}\mathbb{E}(\|\boldsymbol{J}\|^2),
            \end{align}
            where $\boldsymbol{J}$ is the gradient error due to sensor noise and packet errors given by
            \begin{align}\label{mainbound}
                \boldsymbol{J}=\nabla F(\boldsymbol{w}_t)-\frac{\sum_{i=1}^{N} \sum_{m=1}^{D_i} \nabla f(\boldsymbol{w}, \boldsymbol{x}_{im}+\boldsymbol{n}_{im}, y_{im})C(\boldsymbol{w}_{i,t})}{\sum_{i=1}^{N}D_i C(\boldsymbol{w}_{i,t})}.
            \end{align}
            With the convexity assumption we have $\nabla_{\boldsymbol{w}} f(\boldsymbol{w}_t, \boldsymbol{x}_{im} + \boldsymbol{n}_{im}, y_{im}) \ge \nabla_{\boldsymbol{w}} f(\boldsymbol{w}_t, \boldsymbol{x}_{im}, y_{im}) +\nabla_{\boldsymbol{wx}}^2 f(\boldsymbol{w}_t, \boldsymbol{x}_{im}, y_{im}) \boldsymbol{n}_{im}$, thereby we get
            \begin{align}\label{mainbound}
                \boldsymbol{J} &\le \left [\nabla F(\boldsymbol{w}_t)-\frac{\sum_{i=1}^{N} C(\boldsymbol{w}_{i,t}) \sum_{m=1}^{D_i} \nabla f(\boldsymbol{w}, \boldsymbol{x}_{im}, y_{im})}{\sum_{i=1}^{N}D_i C(\boldsymbol{w}_{i,t})} \right ] \\ \nonumber
                &- \left [\frac{\sum_{i=1}^{N} C(\boldsymbol{w}_{i,t}) \sum_{m=1}^{D_i} \nabla_{\boldsymbol{wx}}^2 f(\boldsymbol{w}_t, \boldsymbol{x}_{im}, y_{im}) \boldsymbol{n}_{im}}{D} \right ].
            \end{align}
            The first error term in (\ref{mainbound}), i.e.,
            \begin{align}
                \boldsymbol{A}=\nabla F(\boldsymbol{w}_t)-\frac{\sum_{i=1}^{N} C(\boldsymbol{w}_{i,t}) \sum_{m=1}^{D_i} \nabla f(\boldsymbol{w}, \boldsymbol{x}_{im}, y_{im})}{\sum_{i=1}^{N}D_i C(\boldsymbol{w}_{i,t})},
            \end{align}
            is due to the packet errors, and the second term, i.e., 
            \begin{align}
                \boldsymbol{B}=\frac{\sum_{i=1}^{N} C(\boldsymbol{w}_{i,t}) \sum_{m=1}^{D_i} \nabla_{\boldsymbol{wx}}^2 f(\boldsymbol{w}_t, \boldsymbol{x}_{im}, y_{im}) \boldsymbol{n}_{im}}{D},
            \end{align}
            is affected both by the packet errors and sensor noise, and we have $\mathbb{E}(\|\boldsymbol{J}\|^2) \le \mathbb{E}(\|\boldsymbol{A}\|^2) + \mathbb{E}(\|\boldsymbol{B}\|^2)$. We now derive bounds on the two terms $\mathbb{E}(\|\boldsymbol{A}\|^2)$ and $\mathbb{E}(\|\boldsymbol{B}\|^2)$, respectively. For $\mathbb{E}(\|\boldsymbol{A}\|^2)$ we have 
            \begin{align}
                &\mathbb{E}(\|\boldsymbol{A}\|^2)=\mathbb{E}(\|\nabla F(\boldsymbol{w}_t)-\frac{\sum_{i=1}^{N} \sum_{m=1}^{D_i} \nabla f(\boldsymbol{w}, \boldsymbol{x}_{im}, y_{im})C(\boldsymbol{w}_{i,t})}{\sum_{i=1}^{N}D_i C(\boldsymbol{w}_{i,t})}\|^2) \\ \nonumber
                &=\mathbb{E}(\|-\frac{(D-\sum_{i=1}^{N}D_i C(\boldsymbol{w}_{i,t}))\sum_{i \in E^c} \sum_{m=1}^{D_i} \nabla f(\boldsymbol{w}, \boldsymbol{x}_{im}, y_{im})}{D \sum_{i=1}^{N}D_i C(\boldsymbol{w}_{i,t})}\\ \nonumber
                &+\frac{\sum_{i \in E} \sum_{m=1}^{D_i} \nabla f(\boldsymbol{w}, \boldsymbol{x}_{im}, y_{im})}{D}\|^2) \\ \nonumber
                & \le \mathbb{E}(\frac{(D-\sum_{i=1}^{N}D_i C(\boldsymbol{w}_{i,t}))\sum_{i \in E^c} \sum_{m=1}^{D_i} \|\nabla f(\boldsymbol{w}, \boldsymbol{x}_{im}, y_{im})\|}{D \sum_{i=1}^{N}D_i C(\boldsymbol{w}_{i,t})}\\ \nonumber
                &+\frac{\sum_{i \in E} \sum_{m=1}^{D_i} \|\nabla f(\boldsymbol{w}, \boldsymbol{x}_{im}, y_{im})\|}{D})^2,
            \end{align}
            where $E=\{i|C(\boldsymbol{w}_{i,t})=0\}$ is the set of users that experience packet errors and its complement denoted by $E^c$ is the set of users that correctly transmit their local FL models to the BS. With assumption V, $\|\nabla f(\boldsymbol{w}_t, \boldsymbol{x}_{im}, y_{im})\| \le \sqrt{c_1+c_2\|\nabla F(\boldsymbol{w}_t)\|^2}$, we have $\sum_{i \in E^c} \sum_{m=1}^{D_i} \|\nabla f(\boldsymbol{w}, \boldsymbol{x}_{im}, y_{im})\| \le \sqrt{c_1+c_2\|\nabla F(\boldsymbol{w}_t)\|^2} (\sum_{i=1}^{N} D_i C(\boldsymbol{w}_{i,t})) $ and $\sum_{i \in E} \sum_{m=1}^{D_i} \|\nabla f(\boldsymbol{w}, \boldsymbol{x}_{im}, y_{im})\| \le \sqrt{c_1+c_2\|\nabla F(\boldsymbol{w}_t)\|^2} (D-\sum_{i=1}^{N} D_i C(\boldsymbol{w}_{i,t}))$. Hence, $\mathbb{E}(\|\boldsymbol{A}\|^2)$ can be expressed by
            \begin{align}
                \mathbb{E}(\|\boldsymbol{A}\|^2) \le \frac{4}{D^2} \mathbb{E}\left( D- \sum_{i=1}^{N} D_i C(\boldsymbol{w}_{i,t})\right)^2 (c_1+c_2\|\nabla F(\boldsymbol{w}_t) \|^2).
            \end{align}
            Since $D \ge D- \sum_{i=1}^{N} D_i C(\boldsymbol{w}_{i,t}) \ge 0$, we have
            \begin{align}
                \mathbb{E}(\|\boldsymbol{A}\|^2) \le \frac{4}{D} \mathbb{E}\left( D- \sum_{i=1}^{N} D_i C(\boldsymbol{w}_{i,t})\right) (c_1+c_2\|\nabla F(\boldsymbol{w}_t) \|^2).
            \end{align}
            Since, $\mathbb{E}(C(\boldsymbol{w}_{i,t}))=1-e_{i,t}$, we have
            \begin{align}\label{bound1}
                \mathbb{E}(\|\boldsymbol{A}\|^2) \le \frac{4}{D} \left( \sum_{i=1}^{N} D_i e_{i,t} \right) (c_1+c_2\|\nabla F(\boldsymbol{w}_t) \|^2).
            \end{align}

            %\end{comment}

            %\vspace{50mm}

            %%%%%%%%%%%%%%%%%%%%%%%%%
            
            The second term $\mathbb{E}(\|\boldsymbol{B}\|^2)$ is caused by the sensor noise and can be bounded as 
            
            %\vspace{-8mm}

            \begin{align}\label{bound2}
                &\mathbb{E}(\|\boldsymbol{B}\|^2) = \frac{1}{D^2}\mathbb{E}(\|\sum_{i=1}^{N} C(\boldsymbol{w}_{i,t}) \sum_{m=1}^{D_i} \nabla_{\boldsymbol{wx}}^2 f(\boldsymbol{w}_t, \boldsymbol{x}_{im}, y_{im}) \boldsymbol{n}_{im}\|^2) \\ \nonumber
                & = \frac{1}{D^2} \sum_{i}\sum_{i'} \mathbb{E}[C(\boldsymbol{w}_{i,t}) C(\boldsymbol{w}_{i',t})] \sum_{m}\sum_{m'} \mathbb{E}[\boldsymbol{n}_{im}^T \nabla_{\boldsymbol{wx}}^2 f^T \nabla_{\boldsymbol{wx}}^2 f \boldsymbol{n}_{i'm'}] \\ \nonumber
                & = \frac{1}{D^2} \sum_{i=1}^{N} \mathbb{E}[C(\boldsymbol{w}_{i,t})] \sum_{m=1}^{D_i} \mathbb{E}[\boldsymbol{n}_{im}^T \nabla_{\boldsymbol{wx}}^2 f^T \nabla_{\boldsymbol{wx}}^2 f \boldsymbol{n}_{im}] \\ \nonumber
                & \le \frac{\eta}{D^2} \sum_{i=1}^{N} (1-e_{i,t}) \sum_{m=1}^{D_i} \mathbb{E}[\boldsymbol{n}_{im}^T  \boldsymbol{n}_{im}] = \frac{\eta M}{D^2} \sum_{i=1}^{N} D_i (1-e_{i,t}) \sigma^2_i,
            \end{align}

            \begin{figure*}[t]
            \centering
            \begin{minipage}[t]{.40\textwidth}
            \begin{subfigure}{\textwidth}
              \includegraphics[width=\linewidth]{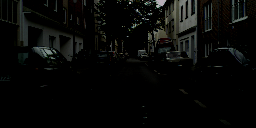}
              \caption{First street scene}
              \label{fig:1_usg}
            \end{subfigure}
            \end{minipage}\hfill
            \begin{minipage}[t]{.40\textwidth}
            \begin{subfigure}{\textwidth}
              \includegraphics[width=\linewidth]{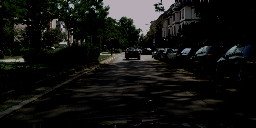}
              \caption{Second street scene}
              \label{fig:2_usg}
            \end{subfigure}
            \end{minipage}\\
            
            \begin{minipage}[t]{.40\textwidth}
            \begin{subfigure}{\textwidth}
              \includegraphics[width=\linewidth]{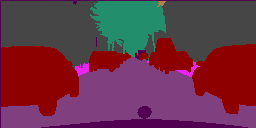}
              \caption{First segmented ground truth}
              \label{fig:1_sg}
            \end{subfigure}
            \end{minipage}\hfill
            \begin{minipage}[t]{.40\textwidth}
            \begin{subfigure}{\textwidth}
              \includegraphics[width=\linewidth]{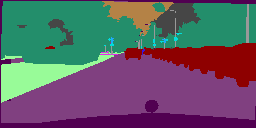}
              \caption{Second segmented ground truth}
              \label{fig:2_sg}
            \end{subfigure}
            \end{minipage}\\
            
            \begin{minipage}[t]{.40\textwidth}
            \begin{subfigure}{\textwidth}
             \includegraphics[width=\linewidth]{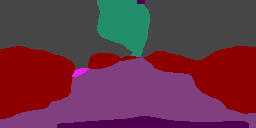}
              \caption{Proposed solution (PA=$88.38\%$, $\mathrm{IoU}(Car)$=$79.20\%$)}
              \label{fig:op_1_gt}
            \end{subfigure}
            \end{minipage}\hfill
            \begin{minipage}[t]{.40\textwidth}
            \begin{subfigure}{\textwidth}
             \includegraphics[width=\linewidth]{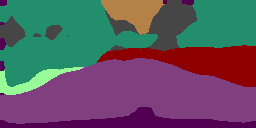}
              \caption{Proposed solution (PA=$82.5\%$, $\mathrm{IoU}(Car)$=$78.39\%$)}
              \label{fig:op_2_gt}
            \end{subfigure}
            \end{minipage}\\

            \begin{minipage}[t]{.40\textwidth}
            \begin{subfigure}{\textwidth}
             \includegraphics[width=\linewidth]{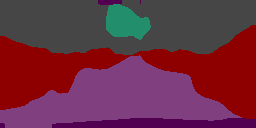}
              \caption{Maximum  rate \cite{Rui} (PA=$82.65\%$, $\mathrm{IoU}(Car)$=$73.81\%$)}
              \label{fig:op_comm_1_gt}
            \end{subfigure}
            \end{minipage}\hfill
            \begin{minipage}[t]{.40\textwidth}
            \begin{subfigure}{\textwidth}
             \includegraphics[width=\linewidth]{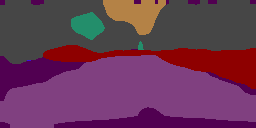}
              \caption{Maximum  rate \cite{Rui} (PA=$55.31\%$, $\mathrm{IoU}(Car)$=$60.42\%$)}
              \label{fig:op_comm_2_gt}
            \end{subfigure}
            \end{minipage}\\
            
            \begin{minipage}[t]{.40\textwidth}
            \begin{subfigure}{\textwidth}
             \includegraphics[width=\linewidth]{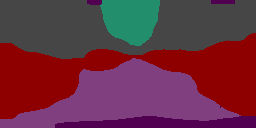}
              \caption{Weighted centroid (PA=$85.62\%$, $\mathrm{IoU}(Car)$=$76.19\%$)}
              \label{fig:wc_1_gt}
            \end{subfigure}
            \end{minipage}\hfill
            \begin{minipage}[t]{.40\textwidth}
            \begin{subfigure}{\textwidth}
             \includegraphics[width=\linewidth]{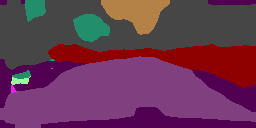}
              \caption{Weighted centroid (PA=$55.45\%$, $\mathrm{IoU}(Car)$=$65.53\%$)}
              \label{fig:wc_2_gt}
            \end{subfigure}
            \end{minipage}\\

            \begin{minipage}[t]{.40\textwidth}
            \begin{subfigure}{\textwidth}
             \includegraphics[width=\linewidth]{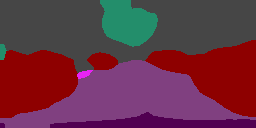}
              \caption{Noise unaware\cite{Joint} (PA=$86.22\%$, $\mathrm{IoU}(Car)$=$78.06\%$)}
              \label{fig:walid_1_gt}
            \end{subfigure}
            \end{minipage}\hfill
            \begin{minipage}[t]{.40\textwidth}
            \begin{subfigure}{\textwidth}
             \includegraphics[width=\linewidth]{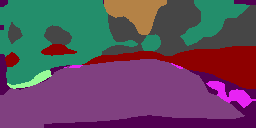}
              \caption{Noise unaware\cite{Joint} (PA=$70.49\%$, $\mathrm{IoU}(Car)$=$63.52\%$)}
              \label{fig:walid_2_gt}
            \end{subfigure}
            
            \end{minipage}\\
            \caption{\color{black}Visual quality and IoU comparisons for collaborative drone-assisted semantic segmentation on autonomous vehicles.}
            \label{fig:cityscape_validation}
            \end{figure*}

            %Intersection-Over-Union is a common evaluation metric for semantic image segmentation.
            %%%%%%%%%%%%%%%%%%%%%%%%
            where the inequality stems from assumption IV. Hence, we get
            \begin{align}
                 &\mathbb{E}(F(\boldsymbol{w}_{t+1})) \le \mathbb{E} (F(\boldsymbol{w}_t))+\frac{\eta M}{2LD^2} \sum_{i=1}^{N} D_i (1-e_{i,t}) \sigma^2_i\\ \nonumber
                &+ \frac{2c_1}{L D} \sum_{i=1}^{N} D_i e_{i,t}-\frac{1}{2L} (1- \frac{4c_2}{D} \sum_{i=1}^{N} D_i e_{i,t}) \| \nabla F(\boldsymbol{w}_t)\|^2.
            \end{align}
            Subtracting $\mathbb{E}(F(\boldsymbol{w}^*))$ from both sides, we have
            \begin{align}
                 \mathbb{E}(F(\boldsymbol{w}_{t+1})-F(\boldsymbol{w}^*)) &\le \mathbb{E}(F(\boldsymbol{w}_{t})-F(\boldsymbol{w}^*)) \\ \nonumber 
                 &-\frac{1}{2L} (1- \frac{4c_2}{D} \sum_{i=1}^{N} D_i e_{i,t}) \| \nabla F(\boldsymbol{w}_t)\|^2 \\ \nonumber
                 &+ \frac{2c_1}{L D} \sum_{i=1}^{N} D_i e_{i,t} \\ \nonumber
                 &+\frac{\eta M}{2LD^2} \sum_{i=1}^{N} D_i (1-e_{i,t}) \sigma^2_i.
            \end{align}
            With assumptions II and III, we have $\| \nabla F(\boldsymbol{w}_t)\|^2 \ge 2\mu (F(\boldsymbol{w}_t)-F(\boldsymbol{w}^*))$ \cite{Boyd} which gives
            \begin{align}\label{recur}
                &\mathbb{E}(F(\boldsymbol{w}_{t+1})-F(\boldsymbol{w}^*)) \le \Phi \mathbb{E}(F(\boldsymbol{w}_{t})-F(\boldsymbol{w}^*)) \\ \nonumber 
                 &+ \frac{2c_1}{L D} \sum_{i=1}^{N} D_i e_{i,t}+\frac{\eta M}{2LD^2} \sum_{i=1}^{N} D_i (1-e_{i,t}) \sigma^2_i,
            \end{align}
            where $\Phi_t=1-\frac{\mu}{L}+\frac{4\mu c_2}{LD} \sum_{i=1}^N D_i e_{i,t}$ controls the convergence speed and the error terms $J_t=\frac{2c_1}{L D} \sum_{i=1}^{N} D_i e_{i,t}$ and $K_t=\frac{\eta M}{2L D^2} \sum_{i=1}^{N} D_i (1-e_{i,t}) \sigma^2_i$ are due to the packet errors and sensor noise.
            
            \end{FlushLeft}
        \end{proof}
\end{appendices}
% \begin{figure}[t]
%      \centering
%      \begin{subfigure}[b]{0.3\textwidth}
%          \centering
%          \includegraphics[width=\textwidth]{images/}
%          \caption{iid}
%          \label{fig:psnr_iid}
%      \end{subfigure}
%      \hfill
%      \begin{subfigure}[b]{0.3\textwidth}
%          \centering
%          \includegraphics[width=\textwidth]{images/}
%          \caption{non-iid}
%          \label{fig:psnr_non_iid}
%      \end{subfigure}
%         \caption{Learning curves for different PSNR values.}
%         \label{fig:psnr_curve}
% \end{figure}

% \begin{figure}[t]
% \begin{subfigure}{.25\textwidth}
%   \centering
%   \includegraphics[width=1\linewidth]{}
%   \subcaption{}
%   \label{fig:hc}
% \end{subfigure}
% \begin{subfigure}{.25\textwidth}
%   \centering
%   \includegraphics[width=1\linewidth]{}
%   \subcaption{}
%   \label{fig:bc}
% \end{subfigure}
% \caption{Impact of the drone height and bandwidth on the ATL.}
% \label{fig:compare_B&H}
% \end{figure}

%----------------------------------------------------------------------------------------
%	BIBLIOGRAPHY

\bibliographystyle{IEEEtran}
\bibliography{Bibliography.bib}

\end{document}